\documentclass[article, aps,prd,floatfix, nofootinbib]{revtex4-2}
\usepackage{graphicx}
\usepackage{amsfonts}
\usepackage{amsmath}
\usepackage{rotating}
\usepackage{amssymb,amsthm}
\usepackage{xcolor}
\usepackage[normalem]{ulem}
\usepackage{appendix}
\usepackage{hyperref}
\usepackage{longtable}
\usepackage{footnote}

\newtheorem{thm}{Theorem}[section]

\newtheorem{prop}[thm]{Proposition}

\theoremstyle{definition}
\newtheorem{defn}[thm]{Definition}
\theoremstyle{remark}
\newtheorem{rem}[thm]{Remark}

\newcommand{\mbold}[1]{\mbox{\boldmath{\ensuremath{#1}}}}

\def\beq{\begin{eqnarray}}
\def\eeq{\end{eqnarray}}

\def \bell {\mbox{{\mbold\ell}}}
\def \bn {\mbox{{\bf n}}}
\def \bm {\mbox{{\bf m}}}
\def \bh {\mbox{{\bf h}}}
\def \bx {\mbox{{\bf x}}}

\def \bomega {\mbox{{\mbold \omega}}}

\begin{document}


\title{All symmetry groups of pp-waves in teleparallel gravity}

\author{A. Landry}
\email{a.landry@dal.ca}
\affiliation{Department of Mathematics and Statistics, Dalhousie University, Halifax, Nova Scotia, Canada, B3H 3J5}

\author {D. D. McNutt}
\email{mcnuttdd@gmail.com}
\affiliation{The Royal Norwegian Naval Academy, N-5165 Bergen, Norway}



\begin{abstract}

We elaborate on and further develop an approach to determining the teleparallel analogue of spacetimes in General Relativity (GR) by studying the Teleparallel analogue of  pp-Wave (TppW) spacetimes. This relies on using the fact that these solutions belong to the Vanishing Scalar Invariant (VSI) subclass for which the explicit forms of the frame and spin-connection are known. By identifying the pp-wave (ppW) metric within this class, we are able to use frame based symmetry methods and the Cartan-Karlhede (CK) algorithm to determine the necessary form for the frame. Through this analysis we find two overlooked solutions that are permitted in teleparallel gravity (TPG) and in GR. 

\end{abstract}

\maketitle


\section{Introduction}\label{sec:Intro}

Gravity theories describe not only of spacetime structures like Black Hole (BH) and galaxy formation, but wave processes as well. One only has to think of gravitational waves (GW) whose definition for plane waves can be summarized as a perturbation of Minkowski spacetime \cite{Weinberg,Einstein:1937qu,Flanagan:2005yc}. GWs coming from a binary system of two merging BHs were observed for the first time during the LIGO and VIRGO experiments in 2016 \cite{LIGOScientific:2016aoc,LIGOScientific:2016nwa,LIGOScientific:2016wyt,Arnaud:2016wen,LIGOScientific:2016kum}. Meanwhile, there have also been other theories concerning such GWs and their possible generalizations \cite{Flanagan:2005yc,Coller:1958tx}. There are also other types of the same wave phenomena type such as matter waves \cite{Lanczos:1942zz,Balazs:1958zz,Peres:1960wfo,Penrose:1965rx}. There has also been a multitude of works on gravitational thermal radiation such as Hawking radiation for spherically symmetric objects (see refs \cite{Hawking:1974rv,Hawking:1975vcx,Wald:1975kc,Arefeva:1995xt} for details) as well as Unruh radiation coming from linearly accelerated objects. In the latter case, there is the emergence of oscillatory processes \cite{Hammad:2021rey,Alsing:2004ig,Ford:2005sh,Akhmedov:2007xu}. This is particularly the case of electromagnetic waves from Unruh radiation of these phenomena one of the most credible toy model of physical wave processes are the parallel propagating waves (pp-Waves or plane-fronted waves) \cite{Ehlers:1962zz,Lee:2005ag,Gurses:1978zs,Aichelburg:1970jk,kramer,sippel1986symmetry}.The existence of GWs and their effects give insight into the validity of alternative gravity theories as well. In teleparallel gravity, GWs have been studied indirectly but without full knowledge of the solutions form.  \cite{hohmann2018propagation, emtsova2024equivalence}. These studies have assumed the existence of plane gravitational waves which are a special case of the pp-wave spacetimes admitting a high degree of symmetry.

The pp-Waves (ppW) are characterized by waves propagating in a timelike plane with a null wave vector, $\bell$, for which the wave fronts are spatial planes which are parallel to each other. This characterization is equivalent to the wave vector $\bell$ being covariantly constant, so that $\nabla \bell = 0$. The timelike plane can be coordinatized using null coordinates $(u,v)$ so that $\bell = \partial_v$, while the spatial plane is coordinatized with spatial coordinates $(x,y)$. In this coordinate system the ppW metric is now \cite{sippel1986symmetry,kramer}:
\begin{align}\label{ppwavedefinition}
	ds^2 =& -2H(u,x,y)\,du^2 -2dudv + dx^2 +dy^2,
	\nonumber\\
	=& -2H(u,r,\theta)\,du^2 -2dudv + dr^2 +r^2\,d\theta^2,
\end{align}
where $H(u,x,y)$ is an arbitrary function. Since ppWs are described as a 4-dimensional (4D) spacetime, the latter will respect the principles of General Relativity (GR) and therefore the metric associated with ppWs will basically satisfy Einstein's equations \cite{sippel1986symmetry,kramer,Ehlers:1962zz}:
\begin{equation} \label{einsteineqns}
	\overset{\ \circ}{G}_{ab}=\kappa\,\Theta_{(ab)} ,
\end{equation}
where $\overset{\ \circ}{G}_{ab}$ is the Einstein Tensor, $\kappa$ and $\Theta_{(ab)}$ is the energy-momentum (EM). According to GR, the Ricci curvature $\overset{\ \circ}{R}_{ab}$ for these ppWs is defined by the only following non-zero component \cite{sippel1986symmetry}:
\begin{align}\label{riccicurvature}
	\overset{\ \circ}{R}_{11}=\left(H_{,22}+H_{,33}\right) \equiv \nabla^2\,H = \left( \partial_x^2+\partial_y^2\right)\,H.
\end{align}   
For spacetimes and gravity theories where the Ricci curvature is zero, the equation (eqn) \eqref{riccicurvature} becomes simply a 2-dimensional (2D) Laplace equation defined as $\nabla^2\,H=0$. In this case, any 2D Laplace equation solution for $H$ will lead to vacuum ppWs. We will see  later that this is the case for Teleparallel gravity (TPG) theories as well.

We should note that the same ppWs can also be solution for a spacetime with a null fluid and/or an electromagnetic source satisfying the Einstein-Maxwell field equations (FEs). In the latter case, the eqn \eqref{einsteineqns} becomes \cite{sippel1986symmetry}:
\begin{equation} \label{einsteinmaxwell}
	\overset{\ \circ}{G}_{ab}=\kappa\, \Theta_{(ab)}=\kappa\,\left(F_{ac}F^c_{~b}+\tfrac{1}{2}g_{ab}F_{cd}F^{cd}\right),
\end{equation}
where $F_{ac}$ is the electromagnetic stress-energy tensor. In addition to the eqn \eqref{einsteinmaxwell}, the electromagnetic ppWs will have to satisfy the conservation laws as $F^{ab}_{~;b}=0$,  $*F^{ab}_{~;b}=0$ and $dF=0$ \cite{sippel1986symmetry}. 

While the ppWs are defined by the existence of a covariantly constant null vector-field $\bell$ with $\nabla \bell = 0$, there are many subclasses of ppW depending on the choice of $H(u,x,y)$ in eqn \eqref{ppwavedefinition}. In particular, there are many different symmetry groups and their associated Lie algebras permitted within the ppW solutions \cite{sippel1986symmetry}. In GR, there are several classes of solutions ranging from 2-dimensional to 7-dimensional Lie algebras leading to different types of $H(u,x,y)$ functions and symmetry operators \cite{sippel1986symmetry,kramer}. In this paper we are interested in what symmetry groups are permitted for the teleparallel ppWs.

Before going further with the ppWs solutions and symmetries, we will briefly introduce TPG here, and discuss it in greater detail in the next section. Teleparallel $F(T)$-type gravity ($F(T)$ TPG) is an alternative theory to GR whose physical quantities are described by only the spacetime torsion tensor \cite{Aldrovandi_Pereira2013,Bahamonde:2021gfp,Krssak:2018ywd}. This physical quantity is fundamentally a function of the frame ${\bf h}_a$ (and its derivatives) as well as the spin-connection $\omega^a_{~bc}$, which gives a frame based theory instead of a metric-based theory. Geometrically speaking, a Riemannian geometry in GR is defined in terms of the curvature of the Levi-Civita (LC) connection which is ultimately defined by the metric. However, this changes radically for $F(T)$ TPG, where the curvature-free condition allows for many choices of connection.

The simplest form of teleparallel theory leading to GR-like FEs is the Teleparallel Equivalent to General Relativity (TEGR). It is based on the torsion scalar $T$ itself constructed from the torsion tensor \cite{Aldrovandi_Pereira2013}. The most common generalization is the $F(T)$ TPG, with a function $F$ of the torsion scalar $T$ \cite{Ferraro:2006jd,Ferraro:2008ey,Linder:2010py}. Covariantly, the geometries associated with the $F(T)$ TPG are gauge invariant with vanishing curvature, ${\bf R}=0$, and vanishing non-metricity ${\bf \nabla g} = {\bf Q}=0$. This requirement will permit the spin-connections $\omega^a_{~bc}=0$ in a special class of frame, known as a {\it  proper frame}, and non-vanishing for other frames \cite{Lucas_Obukhov_Pereira2009,Aldrovandi_Pereira2013,Krssak:2018ywd}. Even though TPG is locally invariant under the Lorentz definition, a proper frame is not defined in a spin-connection invariant manner and some problems may arise concerning symmetries and potential extra Degree of Freedom (DOF) problems \cite{Krssak_Pereira2015}. In addition, the above considerations are usually valid for other similar approaches like the New General Relativity (NGR) (see refs. \cite{Hayashi:1979qx,BeltranJimenez:2019nns,Bahamonde:2024zkb} and refs. within), the symmetric teleparallel $F(Q)$-type gravity (see refs. \cite{Heisenberg:2023lru,Heisenberg:2023wgk,Flathmann:2021itp,Hohmann:2021ast} and refs. within), some intermediate theories such as $F(T,Q)$-type, $F(R,Q)$-type, $F(R,T)$-type (see refs. \cite{BeltranJimenez:2019esp,Nakayama:2022qbs,Xu:2019sbp,Maurya:2024mnb,Maurya:2024ign,Maurya:2024nxx,Harko:2011kv} and refs. within).

For a given gravity theory it is necessary to determine solutions that reflect our physical Universe. In $F(T)$ TPG, the freedom of choice in choosing the connection complicates generating physically meaningful solutions. Furthermore, the teleparallel analogue of the Ricci tensor in the FEs is no longer necessarily symmetric. In the following subsection we will see that depending on the $F(T)$ TPG theory, the requirement that this tensor is related to a symmetric energy momentum tensor will impose additional conditions that must satisfied. Instead of directly solving the FEs, we will employ symmetries and symmetry methods to analyze the teleparallel analogues of the ppW spacetimes. Symmetries and symmetry methods are powerful tools in the analysis of complex mathematical and physical problems. Techniques based on symmetry have been used in GR to produce meaningful solutions with closed form expressions.

In an analoguous manner, frame based symmetry approaches have been implemented for Riemann-Cartan geometry \cite{mcnutt2024symmetries, Coley:2019zld, McNutt:2023nxm} and these can be applied to the special case of TPG. Generally, these symmetries associated with Lie groups and algebras are expressed in terms of operators called Killing Vector fields (KVFs) in GR. However, in frame based theories, not all KVFs will lead to symmetries for the geometry since any KVF, $X$ must also satisfy:
\beq 
\mathcal{L}_{{\bf X}} T^a_{~bc} = 0  \quad\text{and} \quad \mathcal{L}_{{\bf X}} \nabla^p\,T^a_{~bc} = 0, \quad\text{where}\quad p \in \mathbb{N} . \label{Intro3}
\eeq
\noindent A KVF which satisfies this constraint is an {\it  affine frame symmetry generator}. In subsection \ref{sec:Sym} we will discuss how these equations can be written as a first order system which generalizes the Killing equations for KVFs.

Our aim is to use symmetry methods to determine the TppW geometries. However, in order to do so, we require an initial frame anzatz to work with, since $\bell = \partial_v$ with $\nabla \bell = 0$ is far too general as an initial condition to determine the frame. It is well-known that the ppW solutions in GR belong to the subclass of Vanishing Scalar Invariants (VSI) spacetimes where all scalar polynomial curvature invariants vanish \cite{pravda2002all}. It was shown in  ref.~\cite{McNutt:2021tun} that the teleparallel analogue of the class of VSI metrics which contain the ppW solutions are also VSI with respect to scalar polynomial invariants (SPIs) constructed from the torsion tensor and its covariant derivatives. Furthermore in ref.~\cite{McNutt:2021tun}, the explicit proper frames were given for the teleparallel analogue of the VSI class. 

Denoting the set of all torsional SPIs as $\mathcal{I}_T$, we note that the VSI spacetimes are a subclass of the Constant Scalar Invariant (CSI) class of spacetimes where $\mathcal{I}_T$ consists of constants.  Ultimately, both classes are contained within the $\mathcal{I}_{T}$-degenerate class of teleparallel geometries where the invariants in $\mathcal{I}_T$ cannot be used to classify the teleparallel geometries in this class. It was shown in \cite{McNutt:2021tun} that these geometries can be described by proper frames constructed from the Kundt frame \cite{CHPP2009,mcnutt2013vacuum,4dCSI,Higher,McNutt:2021tun}:
\beq \bh^a = \left[ \begin{array}{c} \bn \\ \bell \\ \bar{\bm} \\ \bm \end{array} \right] = \left[ \begin{array}{c}   dv + H du + W_1 dx + W_2 dy \\ du \\ {P}(dx-idy)  \\ {P}(dx+idy) \end{array} \right], \label{AKundtframe}\eeq
\noindent after applying Lorentz transformations that preserve the $\bell$ direction. Here $H(u,v,x,y)$, $W_1(u,v,x,y)$ and $W_2(u,v,x,y)$ are arbitrary real-valued ($\mathbb{R}$-valued) functions and $P=P(u,x,y)$ is an arbitrary complex-valued ($\mathbb{C}$-valued) function. In order to be $\mathcal{I}_T$-degenerate, $H$ must be at least quadratic in $v$ while $W_1$ and $W_2$ are linear in $v$ \cite{McNutt:2021tun}. There are further conditions on the $v$-coefficients of $H$, $W_1$ and $W_2$ but these are not necessary for the current topic.  In the next section, we will provide conditions on the frame functions $H, W_1, W_2$ and $P$ required to describe a VSI TppW spacetime. 

The frame in eqn \eqref{AKundtframe} is defined in Complex Null gauge relative to this frame, the metric is \cite{Coley:2019zld}:
\begin{align}
	g_{ab} = \left[ \begin{array}{cccc} 0 & -1 & 0 & 0 \\ -1 & 0 & 0 & 0 \\ 0 & 0 & 0 & 1 \\ 0 & 0 & 1 & 0 \end{array}\right].
\end{align}
\noindent This is the most natural gauge to use for spacetimes that admit a geometrically preferred null direction, such as $\bell$ in the ppW spacetimes. 

The VSI class implies the spacetime will have zero torsion scalar, $T=0$ as this is a torsional SPI which must vanish. Due to this the FEs will be much simpler in this case since the TppW solutions must now satisfy the TEGR FEs \cite{Landry:2023egz} instead of the full $F(T)$ TPG FEs. The TppW solutions are thus a new additional class of null torsion scalar spacetime solutions, as some flat cosmological spacetimes without any spatial curvature \cite{Landry:2023egz}.

The outline of the remainder of the paper is as follows. In  section \ref{sec:teleparallelsumm}, we discuss the necessary concepts from TPG necessary to introduce the TppW spacetimes. In particular, we review the irreducible decomposition of the torsion tensor and introduce the torsion scalar $T$. After this, we review the $F(T)$ TPG FEs and discuss the implication of $T=0$. The section concludes with a review of the frame based symmetry formalism and the CK algorithm. In section \ref{sec:TppWDef} we explicitly determine the proper frame for a general TppW spacetimes without any further assumptions of symmetry. In section \ref{sec:TppWsmall}, using the fact that the symmetry group of the TppW spacetimes is a subgroup of the ppW spacetimes in GR, we use the results of ref.~\cite{sippel1986symmetry} as a framework to apply the frame based symmetry formalism and determine all permitted symmetry groups of the TppW spacetimes with dimension less than three with a trivial isotropy group. In section \ref{sec:isotropy} we determine the subclasses of TppW spacetimes that admit non-trivial linear isotropy and consequently use the CK algorithm to identify those special subcases which admit the largest symmetry group possible with a non-trivial isotropy group. In section \ref{sec:TppWLarge}  we determine all higher dimensional subgroups of the ppW isometry groups with non-trivial isotropy, which act as higher-dimensional symmetry groups for the TppW spacetimes. The results are summarized in a table in section \ref{sec: Discussion} and the results are discussed.


\section{Summary of Teleparallel Gravity}\label{sec:teleparallelsumm}

\subsection{Irreducible parts of the torsion tensor}

In TPG, all physical quantities are expressed in terms of torsion spacetime. The torsion and curvature tensors are defined as \cite{Obukhov_Pereira2003,Aldrovandi_Pereira2013,Krssak:2018ywd}:
\begin{subequations}
	\begin{eqnarray}
		T^a_{\phantom{a}\mu\nu}&=&\partial_\mu h^a_{\phantom{a}\nu}-\partial_\nu h^a_{\phantom{a}\mu}+\omega^a_{\phantom{a}b\mu}h^b_{\phantom{a}\nu}-\omega^a_{\phantom{a}b\nu}h^b_{\phantom{a}\mu}, \label{torsiondef}
		\\
		R^a_{\phantom{a}b\mu\nu} &=& \partial_\mu \omega^a_{\phantom{a}b\nu}-\partial_\nu \omega^a_{\phantom{a}b\mu}+\omega^a_{\phantom{a}c\mu}\omega^c_{\phantom{a}b\nu}-\omega^a_{\phantom{a}c\nu}\omega^c_{\phantom{a}b\mu}- c^e_{~cd} \omega^a_{~be}= 0, \label{curvature}
	\end{eqnarray}
\noindent where $[\bh_c, \bh_d] = c^e_{~cd} \bh_e$ and $c^e_{~cd}$ are the coefficients of anholonomy for the frame $\bh_a$. 
\end{subequations}

In teleparallel spacetimes, all curvature tensor components must identically satisfy $R^a_{\phantom{a}b\mu\nu}=0$. The torsion scalar $T$ is defined in terms of eqn \eqref{torsiondef} as:
\begin{equation}
	T =
	\frac{1}{4} \; T^a_{\phantom{a}bc} \, T_a^{\phantom{a}bc} +
	\frac{1}{2} \; T^a_{\phantom{a}bc} \, T^{c b}_{\phantom{aa}a} -
	T^a_{\phantom{a}c a} \, T^{b c}_{\phantom{aa}b}.
	\label{TeleLagra}
\end{equation}
This eqn \eqref{TeleLagra} is an example of torsional SPI. However, the torsion tensor $T_{abc}$ as defined in eqn \eqref{torsiondef} can be decomposed into three irreducible parts under the local Lorentz group  and by using the differential form $T^a=\frac{1}{2}T^a_{\phantom{a}bc} \,h^b \wedge h^c$ as \cite{Hehl_McCrea_Mielke_Neeman1995,Krssak:2018ywd,Coley:2019zld,McNutt:2021tun}:
\begin{align}
	T_{abc} = \frac23 (t_{abc} - t_{acb}) - \frac13 (g_{ab} V_c - g_{ac} V_b) + \epsilon_{abcd} A^d, \label{TorsionDecomp} 
\end{align}
where $V_a$ is the Vector part, $A_a$ is the Axial part, and $t_{abc}$ the Tensor part. These three irreducible parts are defined as:
\begin{subequations}
	\begin{align}
		V_a =& T^b_{~ba}, \label{Vtor}
		\\
		A^a =& \frac{1}{6} \epsilon^{abcd}T_{bcd}, \label{Ator}
		\\
		t_{abc} =& \frac{1}{2} (T_{abc}+ T_{bac}) -\frac{1}{6} (g_{ca} V_b + g_{cb} V_a) + \frac{1}{3} g_{ab} V_c. \label{Ttor}
	\end{align}
\end{subequations}
We note that in practice it will be easier to work with $\hat{T}_{abc} = \frac23 ( t_{abc} - t_{acb})$ instead of $t_{abc}$. The eqns \eqref{Vtor} to \eqref{Ttor} for the irreducible parts not only allow to fundamentally define this quantity in TPG, but we will see later that these quantities will facilitate the characterization of all new TppW solutions. Beyond the TppW solution considerations, the NGR is defined from an action integral consisting essentially of three terms corresponding to the contributions of these same torsion tensor irreducible parts \cite{Hayashi:1979qx,BeltranJimenez:2019esp,Bahamonde:2024zkb}.

\subsection{General Teleparallel F(T)-type FEs and conservation laws}

\noindent The $F(T)$ TPG action integral is \cite{Aldrovandi_Pereira2013,Bahamonde:2021gfp,Krssak:2018ywd,Coley:2019zld,Coley:2024tqe,Landry:2024dzq,Landry:2024pxm}:
\begin{equation}\label{1000}
	S_{F(T)} = \int\,d^4\,x\,\left[\frac{h}{2\kappa}\,F(T)+\mathcal{L}_{Source}\right],
\end{equation}
where $h$ is the coframe determinant and $\kappa$ is the coupling constant. By applying the least-action principle to eqn \eqref{1000}, the symmetric and antisymmetric parts of FEs are \cite{Coley:2024tqe,Landry:2024dzq,Landry:2024pxm}:
\begin{subequations}
	\begin{eqnarray}
		\kappa\,\Theta_{\left(ab\right)} &=& F_T\left(T\right) \overset{\ \circ}{G}_{ab}+F_{TT}\left(T\right)\,S_{\left(ab\right)}^{\;\;\;\mu}\,\partial_{\mu} T+\frac{g_{ab}}{2}\,\left[F\left(T\right)-T\,F_T\left(T\right)\right],  \label{1001a}
		\\
		0 &=& F_{TT}\left(T\right)\,S_{\left[ab\right]}^{\;\;\;\mu}\,\partial_{\mu} T, \label{1001b}
	\end{eqnarray}
\end{subequations}
with $\overset{\ \circ}{G}_{ab}$ the Einstein tensor, $\Theta_{\left(ab\right)}$ the EM, $T$ the torsion scalar, $g_{ab}$ the gauge metric, $S_{ab}^{\;\;\;\mu}$ the superpotential (torsion dependent) and $\kappa$ the coupling constant. The TEGR symmetric parts of FEs ($F(T)=T$) will be exactly eqn \eqref{einsteineqns} and the antisymmetric part is trivally satisfied (i.e. $0=0$).

The canonical EM is obtained from $\mathcal{L}_{Source}$ term of eqn \eqref{1000} as:
\begin{align}\label{1001ca}
	\Theta_a^{\;\;\mu}=\frac{1}{h} \frac{\delta \mathcal{L}_{Source}}{\delta h^a_{\;\;\mu}}.
\end{align}
The eqn \eqref{1001ca} symmetric parts are \cite{Coley:2024tqe,Landry:2024dzq,Landry:2024pxm}:
\begin{equation}\label{1001c}
	\Theta_{(ab)}= T_{ab},
\end{equation}
where $T_{ab}$ is the canonical EM tensor and the antisymmetric parts are $\Theta_{[ab]}=0$. The eqn \eqref{1001c} is valid only when the matter field interacts with the metric $g_{\mu\nu}$ defined from the coframe $h^a_{\;\;\mu}$ and the gauge $g_{ab}$, and is not directly coupled to the $F(T)$ TPG. This consideration is possible only for the zero hypermomentum case (i.e. $\mathfrak{T}^{\mu\nu}=0$) as discussed in refs. \cite{Golovnev:2020las,Landry:2024dzq,Landry:2024pxm}. This quantity is defined from eqns \eqref{1001a} and \eqref{1001b} components as \cite{Golovnev:2020las}:
\begin{align}\label{1001h}
	\mathfrak{T}_{ab}=\kappa\Theta_{ab}-F_T\left(T\right) \overset{\ \circ}{G}_{ab}-F_{TT}\left(T\right)\,S_{ab}^{\;\;\;\mu}\,\partial_{\mu} T-\frac{g_{ab}}{2}\,\left[F\left(T\right)-T\,F_T\left(T\right)\right].
\end{align}
The EM conservation law in TPG for $\mathfrak{T}^{\mu\nu}=0$ case states that $\Theta_a^{\;\;\mu}$ satisfies the relation \cite{Aldrovandi_Pereira2013,Bahamonde:2021gfp}:
\begin{align}\label{1001e}
	\overset{\ \circ}{\nabla}_{\nu}\left(\Theta^{\mu\nu}\right)=0,
\end{align}
with $\overset{\ \circ}{\nabla}_{\nu}$ the covariant derivative in GR and $\Theta^{\mu\nu}$ the conserved EM tensor. This eqn \eqref{1001e} is also the GR EM conservation laws. Eqn \eqref{1001e} follows from the $\mathfrak{T}^{\mu\nu}=0$ condition (null hypermomentum). For non-zero hypermomentum situations (i.e. $\mathfrak{T}^{\mu\nu}\neq 0$), we need to satisfy more complex relations than eqn \eqref{1001e} \cite{Golovnev:2020las}.

\subsection{Null and constant torsion scalar spacetimes} \label{sect222}

To solve the TppW set of spacetimes, we must discuss the effect of a constant torsion scalar $T=T_0$. Spacetimes which admit this condition include the VSI and CSI spacetimes. For such spacetimes, eqns \eqref{1001a} and \eqref{1001b} will simplify by setting $\partial_{\mu} T=0$ and leaving only the FEs symmetric parts (eqn \eqref{1001a}) \cite{Landry:2023egz}:
\begin{eqnarray}\label{231}
	\kappa \Theta_{\left(ab\right)}\,&=& F_{T}\left(T_0\right)\,\overset{\ \circ}{G}_{ab}+\frac{g_{ab}}{2}\,\left[F\left(T_0\right)-T_0\,F_{T}\left(T_0\right)\right],
\end{eqnarray}
where $\kappa_{eff}=\frac{\kappa}{F_{T}\left(T_0\right)}$. The FEs antisymmetric parts as defined by eqn \eqref{1001b} becomes identically satisfied. We divide eqn \eqref{231} by $F_{T}\left(T_0\right)$ to obtain:
\begin{eqnarray}\label{231a}
	\kappa_{eff} \Theta_{\left(ab\right)}\,&=& \overset{\ \circ}{G}_{ab}+g_{ab}\,\left[\frac{F\left(T_0\right)}{2\,F_{T}\left(T_0\right)}-\frac{T_0}{2}\right].
\end{eqnarray}
We observe that if $T=T_0=\text{Const}$, then the $F(T)$ teleparallel FEs reduce to those of GR as required for TppW spacetimes.

However, we need to consider the $T_0=0$ case for some TppW class of solutions. The FEs reduce to \cite{Landry:2023egz}:
\begin{eqnarray}\label{232}
	\kappa_{eff} \Theta_{\left(ab\right)}\,&=& \overset{\ \circ}{G}_{ab}+g_{ab}\,\left[\frac{F\left(0\right)}{2\,F_{T}\left(0\right)}\right] ,
\end{eqnarray}
where $\kappa_{eff}=\frac{\kappa}{F_{T}\left(0\right)}$. Here $F(0) \neq 0$ for non-vacuum solutions, especially for TppW solution classes. By setting $F(0) = 0$, we exactly recover eqns \eqref{einsteineqns} in GR and hence the TEGR FEs.

\subsection{Symmetry methods} \label{sec:Sym}

As the ppW spacetimes admit a covariant constant null direction, $\bell = \partial_v$, this is a KVF and we can restrict the class of all VSI$_{T}$ geometries by imposing that these geometries admit an affine frame symmetry generator, ${\bf X} = \bell = \partial_v$, satisfying: 

\beq \mathcal{L}_{\bf X} \bh^a = \lambda^a_{~b} \bh^b \text{ and } \mathcal{L}_{\bf X} \omega^a_{~bc} = 0 , \eeq

\noindent where $\lambda^a_{~b}$ is a Lie algebra generator of the Lie group $SO(1,3)$. If there are additional symmetries, with a potentially non-trivial isotropy subgroup, the above equation is difficult to use to determine the other symmetries. To do so, we must introduce the concept of a symmetry frame.

\begin{defn} \label{defn:SymFrame}
	Given a Riemann-Cartan geometry, $(\tilde{\bh}^a, \bomega^a_{~b})$ admitting an affine frame symmetry generator, ${\bf X}$. The class of {\bf symmetry frames}, $\bh^a$ satisfy the following conditions:
	
	\beq && \mathcal{L}_{{\bf X}} \bh^a = f_X^{\hat{i}} \lambda^a_{\hat{i}~b} \bh^b, \label{TP:frm:sym} \eeq
	
	\noindent where $\lambda^a_{\hat{i}~b}$ are basis elements of the Lie algebra of the linear isotropy group (so that $\lambda_{\hat{i} ab} =- \lambda_{\hat{i} ba}$) and $\hat{i}$ ranges from 1 to the dimension of the linear isotropy group. The components of the vector $f_X^{\hat{i}}$ are functions dependent on the coordinates affected by the vector-fields that generate the symmetries.

\end{defn}

\noindent further conditions on the frame and (potentially non-zero) spin-connection arise from the following theorem:

\begin{thm}\label{thm12}
	The most general Riemann-Cartan geometry admitting a given group of symmetries, ${\bf X}_I,~ I,J,K \in \{1, \ldots, N\}$ with a non-trivial isotropy group of dimension $n$ can be determined by solving for the unknowns $h^a_{~\mu}$, $f_I^{~\hat{i}}$ (with $\hat{i}, \hat{j}, \hat{k} \in \{1,\ldots, n\}$) and $\omega^a_{~bc}$ from the following equations:
	
	\beq \begin{aligned} 
		& { X}_I^{~\nu} \partial_{\nu} h^a_{~\mu} + \partial_{\mu} { X}_I^{~\nu} h^a_{~\nu} = f_I^{~\hat{i}} \lambda^a_{\hat{i}~b} h^b_{~\mu} , \\
		& 2{\bf X}_{[I} ( f_{J]}^{~\hat{k}}) - f_I^{~\hat{i}} f_J^{~\hat{j}} C^{\hat{k}}_{~\hat{i} \hat{j}} = C^K_{~IJ} f_K^{~\hat{k}} , \\
		& { X}_I^{~d} \bh_d( \omega^a_{~bc}) + \omega^d_{~bc} f_I^{~\hat{i}} \lambda^a_{\hat{i}~d} - \omega^a_{~dc} f_I^{~\hat{i}} \lambda^d_{\hat{i}~b} - \omega^a_{~bd} f_I^{~\hat{i}} \lambda^d_{\hat{i}~c} - \bh_c( f_I^{~\hat{i}}) \lambda^a_{\hat{i}~b} = 0 , \end{aligned} \label{Sym:RC:Prop} 
	\eeq
	
	\noindent where $\{ \lambda^a_{\hat{i}~b}\}_{\hat{i}=1}^n$ is a basis of the Lie algebra of the isotropy group, $[\lambda_{\hat{i}}, \lambda_{\hat{j}}] = C^{\hat{k}}_{~\hat{i}\hat{j}} \lambda_{\hat{k}}$, $[{\bf X}_I, {\bf X}_J] = C^K_{~IJ} {\bf X}_K$. 
	
\end{thm}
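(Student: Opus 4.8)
The plan is to recognize that the three displayed equations are nothing more than the frame-component form of two geometric requirements — the symmetry-frame condition of Definition~\ref{defn:SymFrame} for the coframe $\bh^a$ and the gauge-covariant invariance of the spin-connection $\omega^a_{~b}=\omega^a_{~bc}\bh^c$ — together with the integrability condition that the generators $\{{\bf X}_I\}$ close into the prescribed Lie algebra $[{\bf X}_I,{\bf X}_J]=C^K_{~IJ}{\bf X}_K$. Since each equation is produced by an equivalence, the argument has two directions: any Riemann-Cartan geometry carrying the symmetry group must satisfy the system, and conversely any solution $(h^a_{~\mu},f_I^{~\hat i},\omega^a_{~bc})$ reconstructs such a geometry. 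It is the combination of these directions that justifies the claim that solving the system produces the \emph{most general} admissible geometry.

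First I would derive the coframe equation. Writing $\bh^a=h^a_{~\mu}\,dx^\mu$ and ${\bf X}_I=X_I^{~\nu}\partial_\nu$, the Lie derivative is $\mathcal{L}_{{\bf X}_I}\bh^a=(X_I^{~\nu}\partial_\nu h^a_{~\mu}+h^a_{~\nu}\partial_\mu X_I^{~\nu})\,dx^\mu$; substituting this into $\mathcal{L}_{{\bf X}_I}\bh^a=f_I^{~\hat i}\lambda^a_{\hat i~b}\bh^b$ and matching coefficients of $dx^\mu$ yields the first equation. For the integrability condition I would apply the operator identity $[\mathcal{L}_{{\bf X}_I},\mathcal{L}_{{\bf X}_J}]=\mathcal{L}_{[{\bf X}_I,{\bf X}_J]}$ to $\bh^a$. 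Using the coframe equation twice, the left side becomes $\big({\bf X}_I(f_J^{~\hat k})-{\bf X}_J(f_I^{~\hat k})\big)\lambda^a_{\hat k~b}\bh^b-f_I^{~\hat i}f_J^{~\hat j}[\lambda_{\hat i},\lambda_{\hat j}]^a_{~b}\bh^b$, while the right side equals $C^K_{~IJ}f_K^{~\hat k}\lambda^a_{\hat k~b}\bh^b$. Inserting $[\lambda_{\hat i},\lambda_{\hat j}]=C^{\hat k}_{~\hat i\hat j}\lambda_{\hat k}$ and equating the coefficients of the linearly independent generators $\lambda_{\hat k}$ reproduces the second equation, $2{\bf X}_{[I}(f_{J]}^{~\hat k})-f_I^{~\hat i}f_J^{~\hat j}C^{\hat k}_{~\hat i\hat j}=C^K_{~IJ}f_K^{~\hat k}$.

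The third equation comes from the connection. The geometric requirement is that $\omega^a_{~b}$ be invariant under the combined flow of ${\bf X}_I$ and the compensating infinitesimal Lorentz rotation $\lambda_I\equiv f_I^{~\hat i}\lambda_{\hat i}$; because a connection transforms inhomogeneously this is the gauge-covariant statement $\mathcal{L}_{{\bf X}_I}\omega^a_{~b}+d\lambda_I{}^a_{~b}+[\omega,\lambda_I]^a_{~b}=0$ rather than the naive $\mathcal{L}_{{\bf X}_I}\omega^a_{~b}=0$. I would expand $\omega^a_{~b}=\omega^a_{~bc}\bh^c$, compute $\mathcal{L}_{{\bf X}_I}\omega^a_{~b}$ with the coframe equation of the previous step, and read off the coefficient of $\bh^c$. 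This produces the transport term $X_I^{~d}\bh_d(\omega^a_{~bc})$, the three frame-index rotations acting on the indices $a$, $b$, $c$, and the inhomogeneous piece $\bh_c(f_I^{~\hat i})\lambda^a_{\hat i~b}$ arising from $d\lambda_I$, which together are exactly the third equation.

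I expect the connection step to be the main obstacle. One must formulate the invariance of $\omega^a_{~b}$ in genuinely gauge-covariant form and then track, with consistent sign conventions, both the action of $\lambda_{\hat i}$ on upper versus lower frame indices and the inhomogeneous $d\lambda_I$ contribution, so that after extracting frame components the expression collapses precisely onto the stated equation and not a sign-variant of it. A secondary point concerns the completeness direction: I would need to check that the second equation is not merely a consequence of the first but is exactly the obstruction to propagating the $f_I^{~\hat i}$ consistently with the algebra, so that a Frobenius-type argument guarantees that solving the full system recovers every admissible geometry rather than a proper subclass.
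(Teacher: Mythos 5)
You should first know that the paper never proves Theorem \ref{thm12}: it is imported verbatim from the cited frame-symmetry references, so your proposal can only be judged against the geometry itself, not against an in-paper argument. Within that scope, your derivations of the first two equations of \eqref{Sym:RC:Prop} are correct and essentially complete: expanding $\mathcal{L}_{{\bf X}_I}\bh^a$ in coordinates gives the frame equation, and applying $[\mathcal{L}_{{\bf X}_I},\mathcal{L}_{{\bf X}_J}]=\mathcal{L}_{[{\bf X}_I,{\bf X}_J]}$ to $\bh^a$, using the frame equation twice together with $[\lambda_{\hat i},\lambda_{\hat j}]=C^{\hat k}_{~\hat i\hat j}\lambda_{\hat k}$ and the closure of the ${\bf X}_I$, reproduces the second equation exactly, including the quadratic $f_I^{~\hat i}f_J^{~\hat j}C^{\hat k}_{~\hat i\hat j}$ term.

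The gap sits exactly at the two points you deferred. First, the sign question in the connection equation is not a bookkeeping detail that ``collapses'' with care. Taking the printed first equation $\mathcal{L}_{{\bf X}_I}\bh^a=+f_I^{~\hat i}\lambda^a_{\hat i~b}\bh^b$ together with your covariance postulate $\mathcal{L}_{{\bf X}_I}\omega^a_{~b}+d\lambda_I{}^a_{~b}+[\omega,\lambda_I]^a_{~b}=0$, where $\lambda_I=f_I^{~\hat i}\lambda_{\hat i}$ must be the \emph{same} parameter as in the frame equation (the pulled-back coframe and connection are related to the originals by one and the same Lorentz transformation), the component form one actually obtains is $X_I^{~d}\bh_d(\omega^a_{~bc})-\omega^d_{~bc}f_I^{~\hat i}\lambda^a_{\hat i~d}+\omega^a_{~dc}f_I^{~\hat i}\lambda^d_{\hat i~b}+\omega^a_{~bd}f_I^{~\hat i}\lambda^d_{\hat i~c}+\bh_c(f_I^{~\hat i})\lambda^a_{\hat i~b}=0$: every $\lambda$-term carries the opposite sign to the printed third equation, and this relative sign cannot be absorbed by redefining $f_I^{~\hat i}$ or $\lambda_{\hat i}$, since the same $f_I^{~\hat i}$ appears in all three equations. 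The discrepancy is substantive, as a two-dimensional check shows: on the Euclidean plane with ${\bf X}=\partial_\theta$, coframe $\bh^1=\cos\theta\,dx-\sin\theta\,dy$, $\bh^2=\sin\theta\,dx+\cos\theta\,dy$, one finds $\mathcal{L}_{{\bf X}}\bh^a=2\lambda^a_{~b}\bh^b$ (with $\lambda^1_{~2}=-1$, $\lambda^2_{~1}=1$, so $f=2$) and Levi-Civita connection $\omega^a_{~bc}=-\bh_c(\theta)\lambda^a_{~b}$; the sign-flipped equation is satisfied identically, while the printed one reduces to $X(\bh_c(\theta))=2\lambda^d_{~c}\bh_d(\theta)$, which is false (the left side of the $c=1$ component is $-2\cos(2\theta)/r$, the right side $+2\cos(2\theta)/r$). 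So your plan, executed with the paper's own conventions (its torsion definition fixes the gauge transformation of $\omega$), terminates on a sign-variant of the printed system; you would have to either take $-\lambda_I$ as the compensating parameter for the connection, which contradicts using $+\lambda_I$ for the coframe, or conclude that the theorem as printed mixes conventions. Second, the ``most general'' claim is never addressed: what has to be proven is (i) that every invariant Riemann-Cartan geometry admits a frame in which the rotation coefficients lie in a \emph{fixed} $n$-dimensional subalgebra spanned by the $\lambda_{\hat i}$ -- Definition \ref{defn:SymFrame} names such frames, it does not establish their existence -- and (ii) that conversely any solution of \eqref{Sym:RC:Prop} makes each ${\bf X}_I$ annihilate the torsion, the curvature and all their covariant derivatives, so that it is a genuine affine frame symmetry. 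Your Frobenius remark concerns only the consistent propagation of the $f_I^{~\hat i}$; neither direction of the equivalence is actually carried out.
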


To specialize this result to a teleparallel geometry, in addition to the differential equations (DEs) coming from eqn \eqref{Sym:RC:Prop}, the connection must be flat and so the  curvature tensor expressed in terms of the spin connection and its derivatives must vanish.  

\begin{thm} 
	The most general teleparallel geometry which admits a given group of symmetries, ${\bf X}_I,~ I,J,K \in \{1, \ldots N\}$ with a non-trivial isotropy group of dimension $n$ can be determined by solving for the unknowns $h^a_{~\mu}$, $f_I^{~\hat{i}}$ (with $\hat{i}, \hat{j}, \hat{k} \in \{1, \ldots n\}$) and $\omega^a_{~bc}$ from eqns \eqref{Sym:RC:Prop} along with an additional condition:

	\beq \begin{aligned} R^a_{~bcd} & = \bh_c [\omega^a_{~bd}] - \bh_d [ \omega^a_{~bc}]+ \omega^e_{~bd} \omega^a_{~ec} - \omega^e_{~bc}\omega^a_{~ed}- c^e_{~cd} \omega^a_{~be} = 0 , \end{aligned} \label{Sym:Prop} 
	\eeq
	
	\noindent where $[\bh_c, \bh_d] = c^e_{~cd} \bh_e$ and $c^e_{~cd}$ are the coefficients of anholonomy for the frame $\bh_a$. 
	
\end{thm}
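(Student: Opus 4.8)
The plan is to obtain this statement as a direct specialization of Theorem~\ref{thm12}, exploiting the structural fact that a teleparallel geometry is precisely a Riemann-Cartan geometry whose metric-compatible spin connection ($\omega_{ab} = -\omega_{ba}$, so that the non-metricity $Q$ vanishes) is additionally flat, $R^a_{~bcd} = 0$. Since metric compatibility is already built into the Riemann-Cartan data used in Theorem~\ref{thm12}, the only genuinely new ingredient required to pass from the Riemann-Cartan class to the teleparallel class is the vanishing-curvature constraint. First I would establish necessity: any teleparallel geometry admitting the symmetry group $\{X_I\}$ with isotropy dimension $n$ is in particular a Riemann-Cartan geometry admitting the same group with the same isotropy, so by Theorem~\ref{thm12} its frame components $h^a_{~\mu}$, the functions $f_I^{~\hat{i}}$, and the spin connection $\omega^a_{~bc}$ must satisfy \eqref{Sym:RC:Prop}; moreover, by the defining property of teleparallelism the curvature built from $\omega^a_{~bc}$ and its derivatives vanishes, which is exactly the content of \eqref{Sym:Prop}.

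Next I would establish sufficiency. Given any solution $(h^a_{~\mu}, f_I^{~\hat{i}}, \omega^a_{~bc})$ of \eqref{Sym:RC:Prop}, Theorem~\ref{thm12} guarantees that the associated Riemann-Cartan geometry admits $\{X_I\}$ as a symmetry group with the prescribed isotropy subalgebra spanned by $\{\lambda_{\hat{i}}\}$. If in addition \eqref{Sym:Prop} holds, the connection is flat, so the geometry has vanishing curvature and is therefore teleparallel. Combining the two directions shows that the solution set of the combined system \eqref{Sym:RC:Prop}--\eqref{Sym:Prop} parametrizes precisely the most general teleparallel geometry carrying the prescribed symmetry data, which is the assertion of the theorem.

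The step I expect to require the most care is verifying that appending the flatness condition to the symmetry system introduces no hidden integrability obstruction, i.e.\ that \eqref{Sym:Prop} is consistent with, rather than independent of, the symmetry equations \eqref{Sym:RC:Prop}. The key observation is that $R^a_{~bcd}$ is a tensor transforming homogeneously under the linear isotropy group, so once $\omega^a_{~bc}$ satisfies the third equation of \eqref{Sym:RC:Prop} (the infinitesimal invariance of the connection), the Lie derivative $\mathcal{L}_{X_I} R^a_{~bcd}$ is a linear combination of the curvature components themselves, with coefficients built from $f_I^{~\hat{i}}$ and $\lambda^a_{\hat{i}~b}$. Consequently, on the locus $R^a_{~bcd} = 0$ one automatically has $\mathcal{L}_{X_I} R^a_{~bcd} = 0$, so the flatness constraint is preserved by every symmetry generator and may be imposed simultaneously with \eqref{Sym:RC:Prop} without reducing the symmetry group. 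This confirms that \eqref{Sym:Prop} is the single additional condition needed and completes the characterization.
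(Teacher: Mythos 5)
Your proposal is correct and takes essentially the same route as the paper: the paper offers no separate proof of this theorem, presenting it exactly as you do, as the immediate specialization of Theorem~\ref{thm12} obtained by appending the flatness condition \eqref{Sym:Prop}, since a teleparallel geometry is by definition a Riemann--Cartan geometry whose metric-compatible connection has vanishing curvature. Your extra verification that the flatness locus is preserved along the symmetry generators (because the curvature transforms homogeneously under the isotropy action once the third equation of \eqref{Sym:RC:Prop} holds) is sound but goes beyond what the paper records.
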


\noindent In the most general class of ppWs, the isotropy subgroup is trivial and this implies that we can set $f_X^{\hat{i}} = 0$. However, for TppW solutions with a larger isometry group,  these may potentially admit a non-trivial linear isotropy group and so it is possible that $f_X^{\hat{i}} \neq 0$.

\subsection{The modified Cartan-Karlhede algorithm for torsion} 

A modification of the CK algorithm was introduced for teleparallel geometries in \cite{Coley:2019zld}. In this section we will briefly review the approach and define terminology that will be used in the remainder of the paper. We will denote $\mathcal{T}^q$ as the set of components of the torsion tensor and the covariant derivatives of the torsion tensor up to the $q$-$th$ covariant derivative: 
\beq \mathcal{T}^q = \{ T_{abc}, T_{abc|d_1}, \ldots T_{abc|d_1 \ldots d_q} \}. \eeq   For any teleparallel geometry the CK algorithm may be summarized by the following steps:

\begin{enumerate}
	\item Set the order of differentiation $q$ to 0.
	\item Calculate $\mathcal{T}^q$.
	\item Determine the canonical form of the $q$-th covariant derivative of the torsion tensor.
	\item Fix the frame as much as possible, using this canonical form, and record the remaining frame transformations that preserve this canonical form (the group of allowed frame transformations is the {\it linear isotropy group $H_q$}). The dimension of $H_q$ is the dimension of the remaining {\it vertical} freedom of the frame bundle.
	\item Find the number $t_q$ of independent functions of spacetime position in $\mathcal{T}^q$ in the canonical form. This tells us the remaining {\it horizontal} freedom.
	\item If the dimension of $H_q$ and number of independent functions are the same as in the previous step, let $p+1=q$, and the algorithm terminates; if they differ (or if $q=0$), increase $q$ by 1 and go to step 2. 
\end{enumerate}

\noindent The resulting non-zero components of $\mathcal{T}^p$ constitute the {\it Cartan invariants} and we will denote them as $\mathcal{T} \equiv \mathcal{T}^{p+1}$ so that:
\beq \mathcal{T} = \{ T_{abc}, T_{abc|d_1}, \ldots T_{abc|d_1 \ldots d_{p+1}} \}. \label{CartanSet} \eeq 

This algorithm will pick out a preferred frame for a given teleparallel geometry which is adapted to the form of the torsion tensor and its covariant derivatives. This motivates the following definitions:

\begin{defn}
	Any frame determined in a coordinate-independent manner using the Lorentz frame transformations to fix the torsion tensor and its covariant derivatives (hereafter, we will refer to these tensors as torsion tensors) into canonical forms is called an {\bf invariantly defined Cartan frame up to linear isotropy $H_p$}. 
\end{defn}

In general, there is non-trivial linear isotropy and so the frame may not be completely determined. We note that the existence of non-trivial linear isotropy in the CK algorithm is necessary and sufficient for a non-trivial isotropy subgroup in the group of affine frame symmetries \cite{Coley:2019zld}. If the linear isotropy group is trivial, $H_{p} = \emptyset$ then the Cartan frame is fully determined by the teleparallel geometry. 

\begin{defn} 
	If a Cartan frame can be constructed by fixing all of the parameters of the  Lorentz frame transformations, then such a frame is called an {\bf invariant frame}: 
\end{defn}

\begin{rem}
	We remark that a {\it proper frame} is not necessarily a {\it Cartan frame} as its definition relies on the vanishing of the spin-connection which is not Lorentz covariant. It is possible to choose the proper frame and then completely fix the remaining Lorentz parameters.
\end{rem}

For sufficiently smooth frames and spin-connection, the result of the algorithm is a set of Cartan scalars providing a unique local geometric characterization of the teleparallel geometry. The 4D spacetime is characterized by the canonical form used for the torsion tensors, the two discrete sequences arising from the successive linear isotropy groups and the independent function counts, and the values of the (non-zero) Cartan invariants. As there are $t_p$ essential spacetime coordinates, the remaining $n-t_p$ are ignorable, and so the dimension of the affine frame symmetry isotropy group (hereafter called the isotropy group) of the spacetime will be $s=\dim(H_p)$ and the affine frame symmetry group has dimension: \beq r=s+n-t_p. \label{rnumber} \eeq

\section{Teleparallel pp-Wave spacetime geometry structure} \label{sec:TppWDef}

For TppW solutions, we need to define the VSI$_T$ teleparallel geometries class for the proper coframe \cite{Ehlers:1962zz,sippel1986symmetry,kramer}:

\beq \begin{aligned} \bell & = du  , \\
	\bn &= dv + H(u,v, x^1, x^2)  du + W_i(u,x^1,x^2) dx^i , \\
	\bm &= M_0(u,x^1, x^2) du + M_1(u) (dx^1 + i dx^2) , \end{aligned} \label{VSIFrame} \eeq
\noindent where $M_0$ and $M_i$ are $\mathbb{C}$-valued functions, $H$ and $W_i$ are $\mathbb{R}$-valued functions and  $H$ is linear in the $v$-coordinate:
\beq H = H^{(1)}(u,x^1,x^2) v + H^{(0)}(u,x^1,x^2). \label{VSIFns} \eeq

\noindent As coordinates we will use $x^1 = x$ and $x^2 = y$. At times it will be useful to use polar coordinates where 
\beq x^1 = x = r \cos (\theta), \quad x^2 = y = r \sin(\theta). \label{PolarCoords} \eeq

Here, the spin-connection components are all $\omega^a_{~b\mu}=0$ in the proper frame. As it is required that $\nabla \bell =0 $ with respect to the LC connection, $\bell$ is a KVF and we may use theorem \ref{thm12} with $f^{\hat{i}}_{\bell} = 0$ to determine solutions that admit $\bell$ as an affine frame symmetry generator. Doing so, we find that the frame functions in eqn \eqref{VSIFrame} are now independent of $v$.  To satisfy eqns \eqref{ppwavedefinition}, it follows that $M_1= P(u) e^{i Q(u)} = 1$. The frame functions in \eqref{VSIFrame} are defined as $H(u,x, y)$, $W_i(u,x,y )$, $M_1 = 1$ and $M_0(u, x,y )$. We have used the fact that a TppW spacetime must admit a null KVF $\bell = \partial_v$ and that it is VSI with regards to torsion. 

To continue we will impose the TEGR FEs:
\beq D_c (F_{T} S_a^{~bc}) + F_{T}[S_a^{~bc} T^d_{~cd} + \frac12 S_a^{~cd} T^b_{~cd} + S_d^{~bc} T^d_{~ca}] + \frac12 \delta_a^{~b} F(T) =  \kappa  \Theta_a^{~b} , \eeq

\noindent where the superpotential is given as: 
\begin{equation}
	S_a^{\phantom{a}b c}=\frac{1}{2}\left(T_a^{\phantom{a}b c}+T^{cb }_{\phantom{cb }a}-T^{b c}_{\phantom{b c}a}\right)-h_a^{\phantom{a}c}T^{d b }_{~~d} + h_a^{\phantom{a}b }T^{d c}_{~~d}.
\end{equation} 

By setting $F(T) = T$ for TEGR, the resulting FEs can be written in a concise form by working with complex spatial coordinates $\zeta = x+iy$ and using the expressions $W = (W_1 -\sqrt{2} Re(M_0)) + i (W_2 -\sqrt{2} Im(M_0))$. The non-trivial TEGR FEs components are:
\begin{subequations}
\begin{align} 
\kappa \Theta_{23} = & \bar{W}_{,\zeta , \overline{\zeta}} - W_{, \overline{\zeta} , \overline{\zeta}}, \label{TEGRFEsa}\\
	\kappa \Theta_{24} =& {W}_{,\overline{\zeta} , {\zeta}} - \bar{W}_{, {\zeta} , {\zeta}}, \label{TEGRFEsb}\\
	\kappa \Theta_{22} =& \Delta' H + 2\bar{M}_0 ( \bar{W}_{,\zeta , \overline{\zeta}} - W_{, \overline{\zeta} , \overline{\zeta}})  + 2 M_0 ( {W}_{,\overline{\zeta} , {\zeta}} - \bar{W}_{, {\zeta} , {\zeta}}) - \frac12 W_{,\zeta} \bar{W}_{\overline{\zeta}} , \label{TEGRFEsc}
\end{align} 
\end{subequations}
\noindent where $\Delta' = \partial_{\zeta} \partial_{\overline{\zeta}}$ is the 2D Laplacian for $x$ and $y$. From eqns \eqref{TEGRFEsa}--\eqref{TEGRFEsc}, we exploit the fact that the existence of a covariant constant null vector, with respect to the LC connection, implies that electromagnetic non-null fields, perfect fluids and $\Lambda$-term solutions cannot occur as sources to the EFEs \cite{kramer} and require this to the teleparallel analogues as well. The only allowed solutions for eqns \eqref{TEGRFEsa}--\eqref{TEGRFEsc} are vacuum, Einstein-Maxwell null fields and pure radiation fields.

This implies that $\Theta_{23} = \Theta_{24} = 0$ and eqns \eqref{TEGRFEsa} and \eqref{TEGRFEsb} are vanishing and then $W = \partial_{\zeta} w$ for a scalar function $w(u,x,y)$. Without loss of generality, we can apply a coordinate transformation $v' = v+w$ so that $W = 0$ in the new coordinate system \cite{pravda2002all}. From this last consideration, we have the following identities:
\beq W_1 = \sqrt{2}\, Re(M_0) \text{ and } W_2 = \sqrt{2}\, Im(M_0), \label{eqn:ppwave2}\eeq

\noindent and then eqn \eqref{TEGRFEsc} will become for $H$:
\beq  \kappa \Theta_{22} = \Delta' H=\left(\partial_x^2+\partial_y^2\right)\,H. \eeq
\noindent With the conditions that $H=H(u,x,y), M_1 =1$ and $W_1 = W_2 = 0$ we have recovered the proper frame for the TppW spacetimes. The non-null torsion tensor components are given in appendix \ref{sec:appendixA}.


\section{The teleparallel analogue of pp-wave solutions with small symmetry groups} \label{sec:TppWsmall}

Noting that the affine frame symmetries are necessarily KVFs but not all KVFs are affine frame symmetries. We will initially use ref. \cite{sippel1986symmetry} to establish the existence of lower symmetry TppW geometries.

In this case, instead of determining an affine symmetry frame, we will work with the following frame independent quantities defined in eqn \eqref{Intro3}, that is the Lie derivatives of the torsion tensor and its covariant derivatives. We will, however, work with the coframe \eqref{VSIFrame} and impose the TppW conditions in eqns \eqref{Intro3}. As the proper frame for the TppW is completely determined, we may use the Lie derivatives with respect to the KVFs to determine constraints on the functions $H$ and $M_0$. Since all affine frame vector fields are KVFs but not all KVFs will be affine frame vector fields, for small symmetry groups we may check the KVFs in \cite{sippel1986symmetry} to determine which are affine frame symmetries. As a final note, for the sake of brevity, we will only display the final result for each symmetry group as the resulting DE coming from \eqref{Intro3} are straightforward to solve.  
 
\subsection{2-dimensional symmetry groups}

In the coordinate system $\{ u,v, r, \theta\}$, in addition to $\bx_1 = \partial_v$ there are three possible $\bx_2$ cases:

\beq \begin{aligned}
\text{Case 2:~ } & \partial_{\theta}, \\
\text{Case 3:~ } & \epsilon \partial_{\theta}+ u \partial_u - v \partial_v, \\
\text{Case 4:~ } & \epsilon \partial_{\theta} + \partial_u.   
\end{aligned} \eeq

\begin{itemize}

\item {\bf $G_2$ Case 2}

The solution to the eqns \eqref{Intro3} is:

\beq
\begin{aligned}
	H &= H(u,r),\\
	M_0 &= m_0(u,r)e^{i\theta} - i \frac{m_1(u)}{r}e^{i\theta}+ C_1, \quad C_1 \in \mathbb{C} ,\\
\end{aligned} \label{eqn:Case2}
\eeq
\noindent where $m_0$ and $m_1$ are $\mathbb{C}$-valued functions.

\item {\bf $G_2$ Case 3}

The solution to the eqns \eqref{Intro3} is:

\beq
\begin{aligned}
	H & = u^{-2} Re(H'(r, u^{- i \epsilon}e^{i \theta})),\\
	M_0 & = u^{i\epsilon-1} m_0(r, u^{- i \epsilon}e^{i \theta}) + C_1 u^{i\epsilon-1} + C_2, \quad C_1, C_2 \in \mathbb{C}, \\
\end{aligned} \label{eqn:Case3}
\eeq

\noindent where $H'$ and $m_0$ are $\mathbb{C}$-valued functions.

\item {\bf $G_2$ Case 4}

The solution to the eqns \eqref{Intro3} is:

\beq
\begin{aligned}
	H &= Re(H'(r, e^{i (\theta-\epsilon u)})),\\
	M_0 &= e^{i\epsilon u} m_0(r, e^{i (\theta-\epsilon u)}) + C_1 e^{i \epsilon u} + C_2, \quad C_1, C_2 \in \mathbb{C} , \\
\end{aligned} \label{eqn:Case4}
\eeq

\noindent where $H'$ and $m_0$ are $\mathbb{C}$-valued functions.

\end{itemize}

\subsection{3-dimensional symmetry groups}

In the coordinate system $\{ u,v, r, \theta\}$, in addition to $\bx_1 = \partial_v$ there are four possible pairs of KVFs $\left(\bx_2,\,\bx_3\right)$:
\beq \begin{aligned}
    \text{Case 5:~} & u \partial_u-v \partial_v, \text{ and } \partial_{\theta},  \\
    \text{Case 6:~} & \partial_u \text{ and } \partial_{\theta}, \\
    \text{Case 7:~} & \partial_u \text{ and } \epsilon \left( u \partial_u - v \partial_v\right)-\partial_{\theta}, \\ 
    \text{Case 8:~}  & \partial_u \text{ and } \epsilon \left( u \partial_u - v \partial_v\right)+\rho\,\partial_x+\sigma\,\partial_y, \\
\end{aligned} \eeq, 

\noindent where $\epsilon, \rho$ and $\sigma$ are $\mathbb{R}$-valued constants.

\begin{itemize}

\item {\bf $G_3$ Case 5}

The solutions to the eqns \eqref{Intro3} is:
\beq
\begin{aligned}
    H & = \frac{H_0(r)}{u^2},\\
    M_0 &= \frac{m_0(r)}{u}\,e^{i\theta} + C_1, \quad C_1 \in \mathbb{C} ,
\end{aligned} \label{eqn:Case5}
\eeq

\noindent where $m_0$ is a $\mathbb{C}$-valued function.

\item {\bf $G_3$ Case 6}

The solutions to the eqns \eqref{Intro3} is:
\beq
\begin{aligned}
    H & = H_0(r),\\
    M_0 &= m_0(r)\,e^{i\theta}+ C_1, \quad C_1 \in \mathbb{C} ,
\end{aligned} \label{eqn:Case6}
\eeq

\noindent where $m_0$ is a $\mathbb{C}$-valued function.

\item {\bf $G_3$ Case 7}

The solutions to the eqns \eqref{Intro3} is:
\beq
\begin{aligned}
    H & = H_0(r)\,e^{2\epsilon\theta},\\
    M_0 &= m_0(r)\,e^{(i+\epsilon)\theta}+ C_1, \quad C_1 \in \mathbb{C},
\end{aligned} \label{eqn:Case7}
\eeq

\noindent where $m_0$ is a $\mathbb{C}$-valued function.

\item {\bf $G_3$ Case 8}

There are two solution to the eqns \eqref{Intro3}.
If $\epsilon = 0$ then the functions are: 
\beq
\begin{aligned}
    H & = H_0\left(\rho\,y-\sigma x\right),\\
    M_0 &= C_1 u + C_2, \quad C_1, C_2 \in \mathbb{C}.
\end{aligned} \label{eqn:Case8a}
\eeq
 Otherwise, with $\epsilon \neq 0$ the functions take the form:
\beq
\begin{aligned}
    H & = H_0\left(\rho\,y-\sigma x\right)\,e^{-\frac{2\epsilon}{\rho}\,x},\\
    M_0 &= m_0\left(\rho\,y-\sigma x \right)\,e^{-\frac{\epsilon}{\rho}\,x}+ C_1, \quad C_1 \in \mathbb{C},
\end{aligned} \label{eqn:Case8b}
\eeq
where $\rho\neq 0$ and $\sigma \neq 0$ and $m_0$ is a $\mathbb{C}$-valued function.
\end{itemize}

\section{Teleparallel pp-Wave solutions admitting linear isotropy}\label{sec:isotropy}

For solutions with larger symmetry groups in GR, it is difficult to determine the potential subgroup of symmetries that hold for its teleparallel analogue. In particular, we anticipate the 5, 6 and 7-dimensional (5D, 6D and 7D) symmetry groups determined in \cite{sippel1986symmetry}  have significantly smaller symmetry groups for their teleparallel analogue. 

However, the analysis carried out so far is difficult to impliment due to these cases potentially requiring the solution of a system of second order DEs for two unknowns which depend on 2 arbitrary functions. In order to study these solutions we will have to use a different set of tools coming from the CK algorithm.  Since we have an explicit form for the TppW solutions, we can determine explicitly solutions that admit linear isotropy and hence determine all TppW solutions that admit a $G_5$ or higher.

There are three irreducible parts of the torsion tensor that are relevant for this discussion, the vector-part, $V_a = T^b_{~ba}$, the axial-part, $A_a = 
\epsilon_a^{~bcd} T_{bcd}$ and the tensor part, $
\hat{T}^a_{~bc}$ such that $
\hat{T}^a_{~ab} = 
\hat{T}_{[abcd]} = 0$. The linear isotropy group for the first two parts will be a subgroup of the Lorentz group that leaves the vector (or one-form) invariant. Since $V_a$ and $A_a$ are proportional to $
\bell$, the linear isotropy group for both parts is $E(2)$. Relative to the frame this subgroup takes the form:
\beq \begin{aligned}
    & \text{Boosts: } \bell' = D^2 \bell,~~{\bf n'} = D^{-2} {\bf n}, {\bf m}'=  {\bf m},~~ D, \in \mathbb{R},\\
    &\text{ Spins: } \bell' = \bell,~~{\bf n'} =   {\bf n}, {\bf m}'= e^{i \theta} {\bf m},~~ \theta \in \mathbb{R}, \\
    &\text{Null Rotations: } \bell' = \bell,~~{\bf n}' = {\bf n} + B {\bf m} + \bar{B} \bar{{\bf m}} + B \bar{B} \bell, {\bf m}' = {\bf m} + B \bell,~~ B \in \mathbb{C}. 
\end{aligned} \label{eqn:SE2} \eeq
\noindent The permitted linear isotropy groups of the tensor part has been studied in \cite{Coley:2019zld}. In what follows we will use these Lorentz transformations to determine the permitted linear isotropy groups for $\nabla^p T^a_{~bc}$ for $p>0$.

To begin we will consider the possibility of the entirety of $SE(2)$ acting as the linear isotropy group. This occurs when $\hat{T}^a_{~bc}$ vanishes. If $\hat{T}^a_{~bc}$ is non-zero, the potential linear isotropy groups are \cite{Coley:2019zld}:
\begin{itemize}
    \item 1-dimensional (1D) boost isotropy,
    \item 1D spin isotropy,
    \item 1D null rotation isotropy,
    \item 2D null rotations isotropy.
\end{itemize}
\noindent The group $SE(2)$ is excluded from the TppW spacetimes due to the following proposition.
\begin{prop}
    If $\hat{T}^a_{~bc} = 0$ then the resulting teleparallel geometry is Minkowski space.
\end{prop}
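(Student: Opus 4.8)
The statement to prove is that if the tensor part of the torsion, $\hat{T}^a_{~bc}$, vanishes, then the teleparallel geometry is Minkowski space. My plan is to show that vanishing $\hat{T}^a_{~bc}$ forces the remaining irreducible parts (the vector part $V_a$ and the axial part $A_a$) to vanish as well for the TppW proper frame, which makes the full torsion tensor zero, and then to argue that a teleparallel geometry with identically zero torsion and zero curvature is flat.

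\emph{First step: compute the irreducible parts in the proper frame.} Since the proper frame for the TppW spacetimes has been completely determined in section \ref{sec:TppWDef} (with $M_1 = 1$, $W_1 = W_2 = 0$, $H = H(u,x,y)$ and $M_0 = M_0(u,x,y)$), and the non-null torsion components are recorded in appendix \ref{sec:appendixA}, I would start by expressing $V_a$, $A_a$ and $\hat{T}^a_{~bc}$ explicitly in terms of the frame functions $H$ and $M_0$. Because $V_a$ and $A_a$ are both proportional to $\bell$, they are controlled by a small number of scalar combinations of derivatives of $M_0$ (and possibly $H$). The key algebraic observation I would look for is that the \emph{same} derivative combinations of $M_0$ that appear in $\hat{T}^a_{~bc}$ also generate $V_a$ and $A_a$; that is, $\hat{T}^a_{~bc} = 0$ should impose enough differential constraints on $M_0$ (and $H$) to annihilate $V_a$ and $A_a$ simultaneously.

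\emph{Second step: deduce that the full torsion vanishes.} Once $\hat{T}^a_{~bc} = 0$ is shown to force $V_a = 0$ and $A_a = 0$, the decomposition in eqn \eqref{TorsionDecomp} gives $T_{abc} = 0$ identically. I expect this reduces to showing that the constraints $\hat{T}^a_{~bc} = 0$ imply $M_0$ is (at most) an affine/constant function of the spatial coordinates and that $H$ reduces to a form carrying no torsion — effectively that all nontrivial frame functions become trivial up to coordinate and gauge freedom.

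\emph{Final step: conclude flatness.} With $T_{abc} = 0$ everywhere and the teleparallel curvature already vanishing by construction (eqn \eqref{curvature}), the geometry has both vanishing torsion and vanishing curvature, hence the LC connection coincides with the flat connection and the metric is that of Minkowski space. The main obstacle I anticipate is the first step: verifying the precise claim that $\hat{T}^a_{~bc} = 0$ genuinely kills $V_a$ and $A_a$ rather than merely constraining them. It is conceivable that $V_a$ or $A_a$ could survive with $\hat{T}^a_{~bc} = 0$; ruling this out requires carefully tracking which frame-function derivatives enter each irreducible part and checking that the tensor-part equations are strong enough. If they are not strong enough on their own, I would supplement them with the TEGR field equations and the Kundt/VSI structural conditions to close the argument.
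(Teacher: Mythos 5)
Your proposal is correct and follows essentially the same route as the paper's proof: setting the components of $\hat{T}^a_{~bc}$ (from appendix \ref{sec:appendixA}) to zero forces $M_0 = M_0(u)$ and $H = Re((x+iy)M_{0,u}) + h(u)$, and substituting this back shows $V_a = A_a = 0$, so the full torsion vanishes and the geometry is Minkowski space. The verification you flag as the potential obstacle is exactly the step the paper carries out, and it closes without needing the TEGR field equations or extra structural conditions.
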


\begin{proof}
    This follows by setting the components of $\hat{T}^a_{~bc}$ to zero and solving for $M_0$ and $H$, yielding the following solution:
    \beq M = M_0(u),\quad H = Re((x+iy)M_{0,u}) + h(u). \eeq
    However, putting this solution into $A_a$ and $V_a$, it follows that both of these vanish, and hence the torsion tensor vanishes. This can only occur for Minkowski space. 
\end{proof}

Similarly, boost isotropy is not permitted for the TppW spacetimes. 
\begin{prop}
    If a TppW admits boost isotropy then it is Minkowski space.
\end{prop}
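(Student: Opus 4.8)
The plan is to use the boost weight grading of the torsion tensor relative to the null coframe \eqref{VSIFrame}. A boost acts on this coframe by $\bell \to D^2 \bell$, $\bn \to D^{-2}\bn$ and $\bm \to \bm$, so that each frame component of the torsion tensor, and of its covariant derivatives, is multiplied by a fixed power of $D$ determined by its boost weight. A boost lies in the linear isotropy group $H_p$ produced by the CK algorithm exactly when it fixes every element of $\mathcal{T}^p$; since the isotropy groups satisfy $H_0 \supseteq \cdots \supseteq H_p$, such a boost must in particular leave the zeroth order torsion $T_{abc}$ invariant.

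First I would read off the boost weights of the nonzero components of $T_{abc}$ in the proper TppW frame from the explicit expressions in appendix \ref{sec:appendixA}. Since $V_a$ and $A_a$ are proportional to $\ell_a$, they acquire a strictly negative power of $D$ under a boost, and the same holds for every nonzero component of the tensor part $\hat{T}_{abc}$; this is precisely the VSI$_T$ statement that the torsion of a TppW geometry carries only strictly negative boost weight. Hence a boost sends each nonzero component of $T_{abc}$ to $D^{k}T_{abc}$ with $k<0$, and the invariance condition $D^{k}T_{abc}=T_{abc}$ for all $D\in\mathbb{R}$ forces $T_{abc}=0$. Note that only the zeroth order torsion is needed, as it already contains no boost weight zero piece.

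With the torsion vanishing identically, its tensor part satisfies $\hat{T}^a_{~bc}=0$, so the preceding proposition applies and the geometry is Minkowski space. Equivalently, a teleparallel geometry with $T^a_{~bc}=0$ and $R^a_{~bcd}=0$ is flat with trivial torsion and therefore Minkowski.

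The main obstacle is justifying that no boost weight zero component survives in $\hat{T}_{abc}$. The heuristic that VSI forces strictly negative boost weight is reliable for scalar curvature invariants but should be checked directly for the irreducible torsion pieces, so the crux of the proof is an inspection of the components in appendix \ref{sec:appendixA} confirming that each nonzero entry of $V_a$, $A_a$ and $\hat{T}_{abc}$ transforms with a nonzero (negative) boost weight, leaving no invariant component able to support a nontrivial boost-isotropic solution.
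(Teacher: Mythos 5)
Your proof is correct and follows essentially the same route as the paper: both arguments observe that every nonzero component of the torsion in the TppW proper frame (the $V_a$, $A_a$ and $\hat{T}_{abc}$ of appendix \ref{sec:appendixA}) carries nonzero boost weight, so boost invariance forces them all to vanish, and Minkowski space then follows from the preceding proposition on $\hat{T}^a_{~bc}=0$. Your boost-weight phrasing is just a cleaner restatement of the paper's ``a $1$ paired with a $2$ of the same type'' criterion, with the minor bonus that it kills the full torsion tensor at once.
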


\begin{proof}
    A tensor admits boost isotropy if the only non-zero components of the tensor always contain a $1$ paired with a $2$ of the same type. Imposing this condition for irreducible parts of the torsion tensor, it follows that $\hat{T}$ must vanish and hence is necessarily Minkowski space. 
\end{proof}

Similarly, the 2D linear isotropy group is not permitted for TppW spacetimes.

\begin{prop}
    If a TppW admits a 2D group of null rotations with complex parameter $B$ then it is Minkwoski space. 
\end{prop}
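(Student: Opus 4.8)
The plan is to follow the template of the preceding two propositions: impose invariance of the torsion under the full two-parameter null-rotation group of eqn \eqref{eqn:SE2}, reduce the condition to a constraint on the tensor part $\hat T_{abc}$, show it forces $\hat T_{abc}=0$, and then invoke the first proposition of this section (that $\hat T^a_{~bc}=0$ gives Minkowski space). Since $V_a$ and $A_a$ are proportional to $\ell_a$ they are automatically annihilated by every null rotation, so the entire hypothesis bears on $\hat T_{abc}$ and its covariant derivatives. I would generate the two infinitesimal null rotations by differentiating the frame transformation in eqn \eqref{eqn:SE2} in the real and imaginary parts of $B$ at $B=0$, and act with each on the frame components through $\delta \hat T_{abc} = -\lambda^d_{~a}\hat T_{dbc}-\lambda^d_{~b}\hat T_{adc}-\lambda^d_{~c}\hat T_{abd}$; demanding both variations vanish gives a linear system that any $2$D-null-rotation-invariant tensor must satisfy.

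The computation is made tractable by a structural observation in the adapted TppW frame: because $H$ and $M_0$ are independent of $v$ and $d(du)=0$, every torsion two-form $T^a=d\bh^a$ is proportional to $du=\bell$. Hence every nonzero component $T_{abc}$ carries the $\bell$-index inside its antisymmetric pair, and the irreducible parts are built solely from the first derivatives $H_{,x},H_{,y},M_{0,x},M_{0,y}$ (and their conjugates). Substituting these explicit components into the invariance system collapses it to algebraic relations among these first derivatives; solving them forces the derivatives to reduce to the residual pieces already allowed for flat space, namely $M=M_0(u)$ and $H=\mathrm{Re}((x+iy)M_{0,u})+h(u)$, for which $\hat T_{abc}\equiv 0$. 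The first proposition then applies verbatim and the geometry is Minkowski.

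The point requiring genuine care — and the main obstacle — is that a tensor with the index symmetries of $\hat T_{abc}$ can be stabilised by the $2$D null-rotation subgroup while still being nonzero, exactly as a Petrov type-N Weyl tensor is. One must therefore verify that the particular components carried by $\hat T_{abc}$ in the TppW frame are genuinely incompatible with this invariant form, so that invariance truly sets them to zero rather than merely preserving a surviving ``type-N'' piece. If the conditions on $\hat T_{abc}$ alone do not close the argument, the same two generators must additionally be imposed on the first covariant derivative $T_{abc|d}$; the resulting overdetermined system on $H$ and $M_0$ then forces $\hat T_{abc}=0$. The remaining generator bookkeeping across the three index slots and the concluding appeal to the first proposition are routine.
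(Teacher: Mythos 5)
Your main route fails at its central step, and the failure is precisely the ``type-N'' obstruction that you flag as a secondary concern: it is not a hypothetical risk, it is exactly what happens. At zeroth order, invariance under the full two-parameter null-rotation group kills only the components $\hat{T}_{323}$, $\hat{T}_{212}$, $\hat{T}_{324}$, which by Appendix A are built from $M_{0,x}$ and $M_{0,y}$; this yields $M_0=m_0(u)$ and nothing more. The components that carry $H$, namely $\hat{T}_{223}$ and its conjugate $\hat{T}_{224}$ (with $M_0=m_0(u)$ one has $\hat{T}_{223}=\sqrt{2}\,(-\bar{m}_{0,u}+H_{,x}-iH_{,y})$), are \emph{exactly} invariant under the whole 2D group: every correction term they could generate cancels by the antisymmetry of the torsion in its last two indices, the same mechanism that makes $\Psi_4$ null-rotation invariant for a type-N Weyl tensor. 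Hence the zeroth-order system places no constraint whatsoever on $H$; your claimed collapse to the flat-space form $M=M_0(u)$, $H=\mathrm{Re}((x+iy)M_{0,u})+h(u)$ is false, and the first proposition of the section cannot be invoked from zeroth-order data.

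The paper's proof is your fallback, actually executed: with $M_0=m_0(u)$ the torsion is pure type N, and invariance of the \emph{first} covariant derivative is what produces new conditions --- the corrections to $T_{223|2}$ are proportional to $B\,T_{223|3}+\bar{B}\,T_{223|4}$, so one needs $T_{223|3}=T_{223|4}=0$, i.e. $H_{,x,x}=H_{,x,y}=H_{,y,y}=0$, and the Minkowski conclusion is drawn from there. In your proposal this step is left as an unexecuted contingency whose outcome (``the resulting overdetermined system \ldots then forces $\hat{T}_{abc}=0$'') is merely asserted; note moreover that even after imposing the first-order conditions the surviving invariant component $\hat{T}_{223}$ is a function of $u$ alone (what follows directly is that $H$ is linear in $x,y$, hence that the metric is flat), so closing the argument requires more care than an appeal to $\hat{T}_{abc}\equiv 0$ at zeroth order. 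Since the entire content of the proposition lives in this first-order analysis, which your proposal neither derives nor justifies, the gap is essential.
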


\begin{proof}
    At zeroth order, in order to permit such a linear isotropy group the following components must vanish:
    \beq \hat{T}_{323} = \hat{T}_{212} = \hat{T}_{324} = 0. \eeq 
    From Appendix A, this implies that:
    \beq M_0 = m_0(u). \eeq
    At first order, the linear isotropy condition implies that:
    \beq \begin{aligned}
        H_{,x,y}=H_{,x,x}=H_{,y,y} = 0  .
    \end{aligned} \eeq
    This condition forces the torsion tensor to vanish identically, implying that the spacetime is in fact Minkowski space.
\end{proof}

The remaining 1D linear isotropy groups do yield non-trivial TppW spacetimes. 

\begin{prop}
    If a TppW admits spin isotropy then the frame functions are of the form:
    \beq 
        M_0 = (C_0(u)+iC_1(u))(x+iy), H=C_2(u)(x^2+y^2) ,
    \eeq
    where $C_0, C_1$ and $C_2$ satisfy the following DEs:
    \beq
        \begin{aligned}
            & C_{2,u} - \left(\frac{3 C_{1,u}}{C_1} + 2\sqrt{2} C_0 \right) C_2 - \frac{C_0 C_{1,u,u}}{\sqrt{2} C_1} - \frac{C_{0,u,u}}{\sqrt{2} } = 0 ,\\ 
            & C_{2,u} - \frac{(3 C_{0,u} -\sqrt{2}C_2 -\sqrt{2}C_0^2  + \sqrt{2} C_1^2)}{C_0} C_2  - \frac{\sqrt{2}C_{1,u}^2}{C_0}+ \frac{\sqrt{2} C_{0,u}^2}{C_0} + \frac{ C_1C_{1,u,u}}{\sqrt{2}C_0}- \frac{C_{0,u,u}}{\sqrt{2}}= 0.
        \end{aligned}
    \eeq
    Furthermore, the symmetry group for such a TppW spacetime admits a 4D symmetry group.
    
\end{prop}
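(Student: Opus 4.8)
The plan is to follow the same CK-algorithm strategy used for the three preceding propositions in this section. Spin isotropy, in the sense of the modified CK algorithm, means that in the Kundt-adapted proper frame of eqn \eqref{VSIFrame} every component of the torsion tensor and of its covariant derivatives that is \emph{not} invariant under the spin subgroup of eqn \eqref{eqn:SE2} must vanish. Since a spin acts by $\bm \mapsto e^{i\theta}\bm$, $\bar{\bm}\mapsto e^{-i\theta}\bar{\bm}$ while fixing $\bell$ and $\bn$, a component $T_{abc\ldots}$ is spin-invariant precisely when it carries equal numbers of the frame index $4$ (for $\bm$) and the frame index $3$ (for $\bar{\bm}$); all components with unequal counts must be set to zero. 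I would read these off directly from the explicit torsion list in appendix \ref{sec:appendixA}, using the fact that in the proper frame the spin-connection vanishes, so each covariant derivative reduces to the directional derivative $T_{abc|d}=\bh_d(T_{abc})$ along the frame vectors dual to \eqref{VSIFrame}, simplified by $M_1=1$ and the relations $W_1=\sqrt2\,Re(M_0)$, $W_2=\sqrt2\,Im(M_0)$.

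First, at zeroth order I would impose the vanishing of all spin-charged components of $T_{abc}$. Via the appendix expressions these become linear first/second-order PDEs in the spatial coordinates for $M_0$ and $H$, whose only solutions compatible with the VSI$_T$ structure are the holomorphic-linear profile $M_0=(C_0(u)+iC_1(u))(x+iy)$ and the rotation-invariant quadratic $H=C_2(u)(x^2+y^2)$, with the $u$-dependence left entirely free at this stage. This is the analogue of the step in the null-rotation proof where $M_0=m_0(u)$ and the second-derivative conditions on $H$ were extracted, and I expect it to be short once the charged components have been identified.

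The substantive step is at first order. Substituting the radial forms into $T_{abc|d}=\bh_d(T_{abc})$, I would impose spin invariance once more. Because the frame vector dual to $\bell=du$ carries a $\partial_u$, the charged components of $T_{abc|d}$ no longer vanish identically for generic $C_0,C_1,C_2$; requiring them to vanish yields exactly two independent relations which, after clearing the $\bh_d$-derivatives and the $\sqrt2$ normalization factors, become the two coupled second-order ODEs in $u$ stated in the proposition. I expect this to be the main obstacle: one must invert \eqref{VSIFrame} to obtain the frame vectors, carefully compute the directional derivatives of the zeroth-order torsion components, and track normalization factors so the ODEs come out in the displayed form. The combination $\Delta' H = C_2$ following from eqn \eqref{TEGRFEsc} with $W=0$ should be retained as a consistency check on $\Theta_{22}$, but it does not by itself close the system.

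Finally, for the dimension count I would invoke eqn \eqref{rnumber}. Spin isotropy gives $\dim(H_p)=1$, so $s=1$. Since all residual freedom in the torsion invariants is carried by $C_0(u),C_1(u),C_2(u)$ and their $u$-derivatives, the $(x,y,v)$ dependence having been fixed by the canonical frame, only the coordinate $u$ is essential and hence $t_p=1$. With $n=4$, eqn \eqref{rnumber} then gives $r=s+n-t_p=1+4-1=4$, establishing the claimed 4D symmetry group. The point requiring care here is verifying that no higher covariant derivative reintroduces essential dependence on $x$ or $y$, nor alters the isotropy, so that the algorithm genuinely terminates with $H_p$ one-dimensional and $t_p=1$.
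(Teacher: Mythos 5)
Your proposal has a genuine gap at its foundation: you impose spin-invariance of the torsion components \emph{in the proper frame} of eqn \eqref{VSIFrame}, but spin isotropy in the CK algorithm is a statement about the \emph{invariantly defined Cartan frame}, which is related to the proper frame by a position-dependent Lorentz transformation. The paper's proof first spends the frame-fixing freedom: a null rotation (whose parameter $B$ is built from the frame functions) is chosen so that $\hat{T}_{223}=0$, and a boost $D$ normalizes $\hat{T}_{212}=1$; only the charged components that remain after this fixing impose geometric conditions. In your scheme, $\hat{T}_{223}=0$ instead becomes a geometric restriction on $H$ and $M_0$ in the proper frame. This is strictly stronger than spin isotropy, and it is incompatible with the proposition: substituting $M_0=(C_0+iC_1)(x+iy)$ and $H=C_2(x^2+y^2)$ into the expression of Appendix \ref{sec:appendixA} gives
\begin{equation}
\hat{T}_{223} = \sqrt{2}\left[\,2(C_0-iC_1)^2 - (C_{0,u}-iC_{1,u}) + 2C_2\,\right](x-iy),
\end{equation}
whose vanishing yields \emph{first-order} ODEs (e.g. $C_{1,u}=4C_0C_1$ from the imaginary part) rather than the proposition's second-order DEs; generic solutions of the proposition's DEs have $\hat{T}_{223}\neq 0$ in the proper frame and would be excluded by your conditions. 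The second-order derivatives $C_{0,u,u}$, $C_{1,u,u}$ in the proposition arise precisely because the canonical frame's null-rotation parameter depends on first derivatives of the frame functions, so the first-order Cartan invariants, computed in that frame (where the spin connection is no longer zero, contrary to your simplification $T_{abc|d}=\bh_d(T_{abc})$), contain second derivatives.

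A second, related error: you claim the zeroth-order conditions already force $M_0$ linear in $x+iy$ and $H$ quadratic, with only the $u$-dependence left free. They do not. At zeroth order the only surviving geometric condition is $\hat{T}_{323}=0$, i.e. the Cauchy--Riemann condition $M_{0,x}=-iM_{0,y}$, which gives $M_0=m_0(u,x+iy)$ for an \emph{arbitrary} holomorphic $m_0$ and no condition on $H$ at all; the linearity of $M_0$ and the form $H=C_2(u)(x^2+y^2)$ only follow at first order from $\hat{T}_{333;2}=0$ and $\hat{T}_{322;3}=0$, and the two DEs then come from the real and imaginary parts of $\hat{T}_{222;3}=0$. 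Your final dimension count ($s=1$, $t_p=1$, $n=4$, hence $r=1+4-1=4$ via eqn \eqref{rnumber}) agrees with the paper's, but it rests on the preceding steps, so the proof as proposed does not establish the proposition.
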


\begin{proof}
    At zeroth order, the torsion tensor admits spin isotropy only if :
    \beq 
        \hat{T}_{223}= \hat{T}_{323} =0. 
    \eeq
    However, the first condition can be achieved by a null rotation which fully fixes $B \in \mathcal{C}$. Similarly, applying a boost with parameter $D$ to set $\hat{T}_{212} =1$ may always be chosen. The resulting frame is invariantly defined up to spin linear isotropy. The vanishing of $\hat{T}_{323}$ in Appendix A implies that: 
    \beq 
        M_0 = m_0(u, x+iy) ,
    \eeq
    for some $\mathbb{C}$-valued function $m_0$. At first order, all components without 3 and 4 paired must vanish. In particular, $\hat{T}_{333;2}= 0 $ and $\hat{T}_{322;3} = 0$ imply that 
    \beq M_0 = m_0(u)(x+iy), H = C_2(u) (x^2+y^2). \eeq
    \noindent There are two further conditions coming from the vanishing of $\hat{T}_{222;3}$. The real and imaginary parts can be combined to yield the required DEs in terms of $C_0, C_1$ and $C_2$. 

    In the chosen invariantly defined frame, the non-zero components of the torsion tensor and its covariant derivative are functions of $u$. Imposing the above DEs implies that the second covariant derivative of the torsion tensor also admits spin linear isotropy and its components are also functions of $u$ alone. Thus the number of functionally independent components and the dimension of the isotropy group has stabilized at second order and the CK algorithm terminates. The symmetry group is then according to eqn \eqref{rnumber}: $r=1+4-1 = 4$. 
\end{proof}

\noindent For the 1D group of null rotations, it is always possible to choose coordinates and the frame so that the null rotation parameter is $\mathbb{R}$-valued or purely imaginary. In what follows we will choose the null rotation parameter to be $\mathbb{R}$-valued. 

\begin{prop}
    If a TppW admits a 1D group of null rotations with parameter $\bar{B} =B$ then the frame functions are of the form:
    \beq 
        M_0 = M_0(u, y), \quad\quad\quad H = h_1(u) x + h_2(u,y). \label{eqn:CasemissingG3}
    \eeq
\end{prop}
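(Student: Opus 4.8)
The plan is to impose invariance of the torsion tensor and its covariant derivatives under the one-parameter subgroup of null rotations with $\bar{B} = B$, exactly as in the preceding propositions, and to extract the constraints on $M_0$ and $H$ order by order from the explicit components listed in Appendix A.

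First I would construct the infinitesimal generator $\lambda$ of the real null-rotation subgroup from eqn \eqref{eqn:SE2}. Linearising ${\bf n}' = {\bf n} + B{\bf m} + \bar{B}\bar{{\bf m}} + B\bar{B}\,\bell$ and ${\bf m}' = {\bf m} + B\bell$ at $B = \bar{B} = 0$, the nonzero components of $\lambda$ couple ${\bf n}$ to ${\bf m} + \bar{{\bf m}}$ and couple ${\bf m}, \bar{{\bf m}}$ to $\bell$; antisymmetry $\lambda_{ab} = -\lambda_{ba}$ in the gauge metric then fixes $\lambda$ completely. Because we wish to retain this real null rotation as the linear isotropy, I would use the remaining boost, spin, and transverse (imaginary-$B$) null-rotation freedom to bring the frame to canonical form, in the same spirit as the normalisation of $\hat{T}_{212}$ and $\hat{T}_{223}$ in the spin-isotropy proof.

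Next, at zeroth order I would require that $\lambda$ annihilate the torsion, $\lambda^e_{~a}\hat{T}_{ebc} + \lambda^e_{~b}\hat{T}_{aec} + \lambda^e_{~c}\hat{T}_{abe} = 0$. After the three gauge parameters above are used up, the surviving independent conditions, rewritten through the zeroth-order torsion of Appendix A, should reduce to $M_{0,x} = 0$, giving $M_0 = M_0(u,y)$. At first order I would impose the same invariance on $\hat{T}_{abc|d}$; using the form of $M_0$ just obtained, the new constraints involve the second spatial derivatives of $H$ entering $d{\bf n}$ and force $H_{,xx} = H_{,xy} = 0$, so that integration in $x$ yields $H = h_1(u)x + h_2(u,y)$. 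Finally I would verify that with these functions the first-order linear isotropy is still exactly the one-dimensional real null rotation, so that the CK data have stabilised and no further reduction of the frame functions is forced.

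The hard part will be the zeroth-order step. Unlike the diagonal boost and the phase-valued spin generators, the null-rotation generator is nilpotent and mixes ${\bf n}$ with ${\bf m}, \bar{{\bf m}}$ and these in turn with $\bell$, so its kernel on the torsion is governed by a coupled linear system relating components of different boost weight rather than by a single eigenvalue condition. The delicate point is to use up the boost, spin, and transverse null-rotation freedom correctly so that the genuine isotropy constraints are cleanly isolated as $M_{0,x} = 0$ and $H_{,xx} = H_{,xy} = 0$, rather than appearing entangled with gauge-removable components.
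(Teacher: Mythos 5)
Your proposal is correct and follows essentially the same route as the paper's own proof: the paper likewise imposes the null-rotation isotropy on the torsion and its first covariant derivative, uses the boost, spin, and imaginary null-rotation freedom to fix an invariantly defined frame (via $\hat{T}_{323}=1$ and $\hat{T}_{223}=\hat{T}_{224}$), reads off $M_{0,x}=0$ at zeroth order from $\hat{T}_{324}+\hat{T}_{424}=0$, and at first order extracts the vanishing of the second $x$-derivatives of $H$ from $\hat{T}_{222;3}$, integrating to $H=h_1(u)x+h_2(u,y)$ before checking that the algorithm stabilizes. The only cosmetic difference is that you work with the infinitesimal generator of the real null rotation rather than the finite transformation, which is equivalent for this connected one-parameter subgroup.
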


\begin{proof}
    At zeroth order, the linear isotropy condition implies that:
    \beq 
        \hat{T}_{324}+\hat{T}_{424} = M_{0,x} = 0 .
    \eeq
    \noindent To fix the frame we may choose $D$ and $\theta$ such that $\hat{T}_{323}=1$ and fix $\hat{B}'=- B'$ so that $\hat{T}_{223} =\hat{T}_{224}$. The resulting frame is an invariantly defined frame up to the 1D linear isotropy group. At first order, applying a null rotation with parameter $\bar{B} = B$ we find that: \beq 
        \hat{T}_{222;3} = H_{,x,y} + H_{,x,x} = 0 .
    \eeq
    \noindent The solution to this DE is $H = h_1(u)x + h_2(u,y)$. 

    Imposing these conditions it is straightforward to show that the second covariant derivative of the torsion tensor admits the same linear isotropy group. Relative to the invariantly defined frame, there are at most two functionally independent invariants with one appearing at zeroth order and the second at first order. These observations imply that the CK algorithm stops at second order. Using eqn \eqref{rnumber} it follows that the dimension of the symmetry group is $r= 1+4-2 = 3$.
\end{proof}

\begin{rem}
    
As a solution in GR, this is actually a $G_3$ that was missed in ref.~\cite{sippel1986symmetry}, as it admits the KVFs:
\beq 
    \partial_v, - \int h_1 du \partial_v+\partial_x, \left(x-\int u h_1 du\right) \partial_v+u\partial_x.
\eeq

\end{rem}

\subsection{Determining larger symmetry groups with the Cartan-Karlhede algorithm}

It is clear that the dimension of the symmetry group is potentially reduced when one moves from KVF fields in GR to affine frame symmetries in TPG theories due to the constraint in eqn \eqref{Intro3}. A natural question to ask is what is the largest group of affine frame symmetries for the class of teleparallel plane waves. To determine this we will use the CK algorithm to determine the ppW solutions that admit linear isotropy and are locally homogeneous. This condition imposes that relative to an invariantly defined frame, the components of the torsion tensor and its covariant derivatives are constant. 

Using the CK algorithm we find the following results for the TppWs which admit a non-trivial linear isotropy group: 

\begin{prop} \label{prop:1dNullRot}
    The class of locally homogeneous TppWs admitting a
    1D linear isotropy group of null rotations admits a $G_5$ of affine frame transformations and has the following forms:
    \beq
    \begin{aligned}
        & H = C_3 e^{C_2 y}, \quad\quad\quad M_0 = e^{C_2 y}, \quad \\
        & H = -\frac{C_3 C_2 (\sin(C_1/2) + \cos(C_1/2) )}{\cos(C_1)}  e^{C_2 y} , M_0 = e^{C_2 y+i C_1}.
    \end{aligned}        
    \eeq
\end{prop}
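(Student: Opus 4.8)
The plan is to build directly on the immediately preceding proposition, which establishes that any TppW admitting a 1D null-rotation isotropy (with parameter $\bar B = B$) has frame functions of the restricted form \eqref{eqn:CasemissingG3}, namely $M_0 = M_0(u,y)$ and $H = h_1(u)x + h_2(u,y)$, relative to an invariantly defined frame fixed up to the 1D null-rotation subgroup of \eqref{eqn:SE2}. Starting from this form, the extra hypothesis to impose is local homogeneity: relative to the invariantly defined Cartan frame, every element of the Cartan set $\mathcal{T}$ must be constant, i.e. $t_p = 0$ in the language of the CK algorithm.

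First I would recompute the nonvanishing torsion components from Appendix A for the restricted functions \eqref{eqn:CasemissingG3}, then repeat the boost/spin gauge fixing of the previous proof (choosing $D,\theta$ so that $\hat T_{323}=1$ and fixing the residual real null-rotation freedom so that $\hat T_{223}=\hat T_{224}$), so that the surviving isotropy is precisely the 1D null rotation. In this invariant frame the nonzero Cartan scalars are explicit expressions in $h_1$, $h_2$, $M_0$ and their $u,y$-derivatives, and the previous proposition guarantees there are at most two functionally independent such scalars (one at zeroth, one at first order). Demanding that each of these be constant converts them into a coupled system of ODEs in $u$ and $y$ for $h_1$, $h_2$ and $M_0$.

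Next I would solve this constancy system. The conditions force the linear-in-$x$ term to vanish ($h_1 = 0$) and eliminate all $u$-dependence, while the requirement that the remaining $y$-dependent scalars be constant forces $M_{0,y}$ and $h_{2,y}$ to be proportional to $M_0$ and $h_2$, pinning the solution to the exponential form $M_0 \propto e^{C_2 y}$, $H \propto e^{C_2 y}$. Carefully tracking the admissible branches of the algebraic relations among the integration constants — in particular whether the constant phase of $M_0$ is forced to vanish or may remain free — is what produces the two stated families: a real branch with $M_0 = e^{C_2 y}$, and a genuinely complex branch with $M_0 = e^{C_2 y + i C_1}$ whose nonzero phase $C_1$ cannot be removed (the spin freedom having been exhausted in the frame fixing) and which carries the $C_1$-dependent coefficient in $H$.

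Finally I would verify the symmetry count. Having arranged $t_p = 0$ with $s = \dim(H_p) = 1$ for the 1D null-rotation isotropy, and checking that the CK algorithm stabilizes (the second covariant derivative introduces no new functionally independent scalars and preserves the 1D isotropy), eqn \eqref{rnumber} gives $r = s + n - t_p = 1 + 4 - 0 = 5$, so the geometry admits a $G_5$ of affine frame symmetries. The main obstacle I anticipate is the simultaneous frame-fixing and the solution of the coupled nonlinear constancy equations: one must track how the residual null-rotation parameter enters the higher-order Cartan scalars so as not to mistake a gauge-removable function for an essential coordinate, and the branching that yields the two distinct families is easy to overlook.
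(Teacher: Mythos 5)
Your proposal is correct and follows essentially the same route as the paper: fix the frame invariantly via $\hat{T}_{323}=1$ and $\hat{T}_{223}=\hat{T}_{224}$, impose that all Cartan invariants at zeroth and first order are constant (the local homogeneity condition), solve the resulting DEs to pin down the exponential forms of $H$ and $M_0$, and conclude $r = 1+4-0 = 5$ from eqn \eqref{rnumber}. Your additional details --- starting explicitly from the restricted form \eqref{eqn:CasemissingG3}, tracking the two branches through the residual phase of $M_0$, and checking stabilization at second order --- merely flesh out steps the paper's terse proof leaves implicit.
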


\begin{proof}
    Using the conditions in Proposition \ref{prop:1dNullRot} we will choose the parameters $D, \theta$ and $B'$ ($\hat{B}'=- B'$) so that $\hat{T}_{323}=1$ and $\hat{T}_{223} =\hat{T}_{224}$ to produce an invariantly defined frame up to linear isotropy. The geometry is locally homogeneous if and only if the resulting Cartan invariants are constant for each iteration of the CK algorithm. 

    It is straightforward to solve the resulting DEs at zeroth and first order that arise by requiring all Cartan invariants to be constant. Doing so, we recover the required form of $H$ and $M_0$. As there are no functionally independent invariants, according to eqn \eqref{rnumber}, the dimension of the symmetry group is $r=1+4-0 = 5$. 
\end{proof}

\begin{rem}
    The solutions for $H$ are related to Case 9 in ref.~\cite{sippel1986symmetry} through a coordinate change, and this is discussed in the next section.
\end{rem}

\begin{prop} \label{prop:1DSpin}
    The class of locally homogeneous ppWs admitting a
    1D linear isotropy group of spatial rotations admits a $G_5$ of affine frame transformations and has the following two forms:
    \beq
        H = C_1(x^2 +y^2),\quad\quad\quad M_0 = C_1 (x+iy),
    \eeq
    \noindent or 
    \beq
        H = \frac{h}{u^2}(x^2 +y^2),\quad\quad\quad M_0 = \frac{(c_0+ic_1)}{u} (x+iy), 
    \eeq
    \noindent where $h = -c_0^2-\frac{c_0}{\sqrt{2}}$ with $c_1 =0$ when $h \leq \frac{1}{8}$ and $h= \frac{c_1^2+1}{8}$ with $c_0=-\frac{\sqrt{2}}{4u}$ when $h > \frac{1}{8}$ .
\end{prop}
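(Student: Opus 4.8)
The plan is to specialise the preceding spin-isotropy proposition to its locally homogeneous subcase. That result already forces the frame functions into the form $M_0=(C_0(u)+iC_1(u))(x+iy)$ and $H=C_2(u)(x^2+y^2)$, fixes the boost and null rotation so that $\hat{T}_{212}=1$ and $\hat{T}_{223}=0$ (leaving spin as the residual isotropy), and records two DEs relating $C_0,C_1,C_2$. By the CK criterion, local homogeneity means that no component of $\mathcal{T}$ is functionally independent, i.e. $t_p=0$; eqn \eqref{rnumber} then gives $r=s+n-t_p=1+4-0=5$, the claimed $G_5$. The substance of the proof is therefore to find exactly which $C_0,C_1,C_2$ render every Cartan scalar constant.

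First I would write the non-vanishing entries of $\mathcal{T}^0$ and $\mathcal{T}^1$ in the invariantly defined frame and extract the genuine (boost- and spin-invariant) Cartan scalars. The key subtlety is that the boost $D=D(u)$ used to normalise $\hat{T}_{212}=1$ is itself $u$-dependent, so requiring constant Cartan scalars is strictly weaker than requiring $C_0,C_1,C_2$ to be constant: a power-law $u$-dependence can be absorbed by the residual dilation $u\partial_u-v\partial_v$ of Case 5. Consequently the homogeneity conditions split into two branches, a $u$-independent branch (carrying the extra translation $\partial_u$ of Case 6) and a scaling branch $C_0=c_0/u$, $C_1=c_1/u$, $C_2=h/u^2$ (carrying the dilation of Case 5).

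Next I would feed each branch back into the two spin-isotropy DEs, using their undivided forms where $C_0$ or $C_1$ vanishes. The $u$-independent branch collapses to algebraic relations that reproduce the first stated form $H=C_1(x^2+y^2)$, $M_0=C_1(x+iy)$, up to the overall normalisation fixed by a coordinate rescaling. For the scaling branch the powers of $u$ cancel identically and the DEs become polynomial in $c_0,c_1,h$; along the real solution $c_1=0$ one finds $h=-c_0^2-c_0/\sqrt{2}$, a one-parameter family whose maximum over $c_0\in\mathbb{R}$ is exactly $h=\tfrac18$, attained at $c_0=-\sqrt{2}/4$. This accounts for the regime $h\le\tfrac18$; the complementary regime $h>\tfrac18$ is unreachable with real $M_0$ and forces $c_1\ne0$, producing the second family $h=(c_1^2+1)/8$. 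These are precisely the two regimes in the statement.

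The main obstacle is conceptual rather than computational: one must track the boost weight of each torsion component and retain the residual $u\partial_u-v\partial_v$ freedom before imposing homogeneity, since naively demanding that the raw frame functions be $u$-independent would discard the entire scaling branch. Once the invariant scalars are correctly identified, solving the polynomial system and locating the $h=\tfrac18$ threshold through the discriminant of the resulting quadratic $c_0^2+c_0/\sqrt{2}+h=0$ (real $c_0$ iff $h\le\tfrac18$) is routine. I would close by checking that $\mathcal{T}^2$ introduces no new functionally independent invariant, so that the CK algorithm terminates at first order with $t_p=0$ and $s=1$, confirming the $G_5$.
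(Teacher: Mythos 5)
Your proposal is correct and follows essentially the same route as the paper: fix an invariantly defined frame up to spin isotropy (your normalization $\hat{T}_{212}=1$, $\hat{T}_{223}=0$ inherited from the preceding spin-isotropy proposition is in fact the consistent one, since spin isotropy forces $\hat{T}_{323}=0$; the paper's proof text saying $\hat{T}_{323}=1$ is a copy-paste slip from the null-rotation case), impose that all Cartan invariants be constant, solve the resulting DEs, and count $r=s+n-t_p=1+4-0=5$. The branch analysis you supply --- the $u$-independent family versus the $1/u$-scaling family made possible by the $u$-dependent boost normalization, together with the discriminant argument locating the $h=\tfrac{1}{8}$ threshold --- is precisely the content the paper compresses into ``it is straightforward to solve the resulting DEs at zeroth and first order.''
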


\begin{proof}
    Using the conditions in Proposition \ref{prop:1dNullRot} we will choose the parameters $D, \theta$ and $B$ so that $\hat{T}_{323}=1$ and $\hat{T}_{223} =0$ to produce an invariantly defined frame up to linear isotropy. The geometry is locally homogeneous if and only if the resulting Cartan invariants are constant for each iteration of the CK algorithm. 

    It is straightforward to solve the resulting DEs at zeroth and first order that arise by requiring all Cartan invariants to be constant. Doing so, we recover the required form of $H$ and $M_0$. As there are no functionally independent invariants, according to eqn \eqref{rnumber}, the dimension of the symmetry group is $r=1+4-0 = 5$. 
\end{proof}

\begin{rem} The CK algorithm is able to determine another case that was missed in ref.~\cite{sippel1986symmetry}, namely a $G_4$ solution that is a special case of the $G_3$ solution admitting a 1D group of null rotations.
\end{rem}

\begin{prop}
    The class of ppWs admitting a $G_4$ of affine frame transformations with a 1D linear isotropy subgroup of null rotations has the following forms:
    \beq
        H = h(u + C_0 y),\quad\quad\quad M_0 = m_0 (u+ C_0 y) , \label{eqn:CaseMissingG4}
    \eeq
    where $m_0$ is a $\mathbb{R}$-valued function. The new affine frame symmetry generator is then
    \beq 
        C_0 \partial_u + \partial_y. 
    \eeq
    
\end{prop}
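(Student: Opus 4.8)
The plan is to obtain this $G_4$ as a specialization of the $G_3$ null-rotation solution of the preceding proposition, rather than rerunning the Cartan--Karlhede (CK) algorithm from scratch. I would begin from that solution's invariantly defined frame, in which $M_0 = M_0(u,y)$ and $H = h_1(u)\,x + h_2(u,y)$ as in eqn \eqref{eqn:CasemissingG3}, with the residual $1$D null-rotation isotropy fixed by $\hat{T}_{323}=1$ and $\hat{T}_{223}=\hat{T}_{224}$. That $G_3$ geometry carries exactly two functionally independent Cartan invariants, one at zeroth order (call it $I_0$) and one at first order ($I_1$), each a function of $(u,y)$. Since the $1$D isotropy is retained, eqn \eqref{rnumber} with $s=1$ and $n=4$ shows that promoting the symmetry group to a $G_4$ is equivalent to demanding $t_p=1$; that is, $I_0$ and $I_1$ must become functionally dependent.

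First I would write $I_0$ and $I_1$ explicitly from the torsion components of Appendix \ref{sec:appendixA}, as functions of $h_1(u)$, the $y$-derivatives of $h_2$, and $M_0$ together with $M_{0,y}$. The $G_4$ condition $t_p=1$ then reads $dI_0 \wedge dI_1 = 0$, i.e. the Jacobian $\partial(I_0,I_1)/\partial(u,y)$ vanishes, forcing every invariant to depend on $(u,y)$ only through a single combination $\xi$. Since one invariant involves $H_{,x}=h_1(u)$, a function of $u$ alone, compatibility with a genuine two-variable level set together with the frame normalization forces the linear-in-$x$ term to drop out, $h_1=0$; solving the remaining reduced DEs then pins $\xi = u + C_0 y$, and the frame-fixing relation $\hat{T}_{223}=\hat{T}_{224}$ forces $M_0$ to be $\mathbb{R}$-valued. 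This reproduces $H = h(u+C_0 y)$ and $M_0 = m_0(u+C_0 y)$ as claimed in eqn \eqref{eqn:CaseMissingG4}.

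To close the argument I would verify, exactly as in the $G_3$ proposition, that this single-variable form makes the second covariant derivative of the torsion admit the same $1$D null-rotation isotropy while yielding no further functionally independent invariant, so the CK algorithm terminates with $\dim(H_p)=1$ and $t_p=1$; eqn \eqref{rnumber} then gives $r = 1+4-1 = 4$. A direct substitution of the reduced frame functions into the affine frame symmetry conditions \eqref{Intro3} confirms that $X = C_0\,\partial_u + \partial_y$ is the new generator supplementing the three inherited from the $G_3$ case, which reduce to $\partial_v$, $\partial_x$ and $x\,\partial_v + u\,\partial_x$ once $h_1=0$.

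I expect the principal obstacle to be the middle step: turning the functional-dependence condition $dI_0\wedge dI_1 = 0$ into DEs that are tractable, and showing that it \emph{rigidly} forces both the collapse of the $x$-dependence ($h_1\to 0$) and the reality of $M_0$, rather than merely some weaker relation among the frame functions. Carrying the null-rotation frame fixing consistently from zeroth through second order, so that the quantities being compared are genuinely the invariantly defined Cartan scalars, is the delicate bookkeeping required to make the collapse from a $G_3$ to a $G_4$ rigorous.
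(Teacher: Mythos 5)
Your proposal is correct and takes essentially the same approach as the paper: work in the invariantly defined frame fixed by $\hat{T}_{323}=1$ and $\hat{T}_{223}=\hat{T}_{224}$, impose functional dependence $dI\wedge dJ=0$ on all Cartan invariants at zeroth and first order, solve the resulting DEs to force the single-variable forms of $H$ and $M_0$, and conclude $r=1+4-1=4$ from eqn \eqref{rnumber}. Your extra touches—phrasing it as a specialization of the $G_3$ null-rotation case and checking the generator $C_0\partial_u+\partial_y$ against eqn \eqref{Intro3}—mirror what the paper itself notes in the surrounding remark and closing sentence of its proof.
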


\begin{proof}
    Using the conditions in Proposition \ref{prop:1dNullRot} we will choose the parameters $D, \theta$ and $B'$ ($\hat{B}'=- B'$) so that $\hat{T}_{323}=1$ and $\hat{T}_{223} =\hat{T}_{224}$ to produce an invariantly defined frame up to linear isotropy. If there is only one functionally independent invariant at all iterations of the CK algorithm, then this implies that for any two Cartan invariants, $I$ and $J$, $dI \wedge d J = 0$. Applying this result for all pairs of Cartan invariants at zeroth and first order yields a finite list of DEs.
    
    A tedious exercise in solving these DEs yields the required form of $H$ and $M_0$. As there is only one functionally independent invariant, according to eqn \eqref{rnumber}, the dimension of the symmetry group is $r=1+4-1 = 4$. We note that using the anzatz for $H$ in Proposition \eqref{prop:1dNullRot}, and the DEs in \cite{sippel1986symmetry} it is possible to derive the restricted form of $H$ and the KVF as well.
\end{proof}

\section{The teleparallel analogue of pp-Wave solutions with larger symmetry groups} \label{sec:TppWLarge}

\subsection{5-dimensional symmetry groups}

In the coordinate system $\{ u,v, x, y\}$, in addition to $\bx_1 = \partial_v$ there are two possible sets, $\left( \bx_2, \bx_3,\,\bx_4,\,\bx_5\right)$, of KVFs \footnote{i.e.: There is a sign mistake in the expression for the KVF ${\bx}_4$ for Case 9 in ref.~\cite{sippel1986symmetry}. }:
\beq \begin{aligned}
    \text{Case 9: } & \partial_u, \partial_y+\sigma (u\partial_u-v\partial_v), \partial_x{\bf -}\rho (u\partial_u-v\partial_v) \text{ and }  u(\sigma\partial_x+\rho\partial_y)+(\sigma\,x+\rho\,y)\partial_v ,
    \\
    \,
    \\
    \text{Case 10: } & f_1(u)\,\partial_x+f_2(u)\,\partial_y+(f_1'(u)\,x+f_2'(u)\,y)\,\partial_u, 
\end{aligned} \eeq

\noindent where $\rho, \sigma \in \mathbb{R}$ and the $\mathbb{R}$-valued functions $f_1$ and $f_2$ satisfy a system of two second order DEs leading to four possible solutions pairs. 

As in the previous section, we will only state the results and omit the DEs that arise from eqn \eqref{Intro3}. The only exception will be case 10 where the DEs are necessary to prove that the group of affine frame symmetries is smaller than the the symmetry group of the ppW solution. 

\begin{itemize}

\item {\bf $G_5$ Case 9}

The solutions to the eqns \eqref{Intro3} is:
\beq
\begin{aligned}
    H & = H_0\,e^{2\left(\rho\,x-\sigma\,y\right)},\\
    M_0 &= m_0\,e^{\rho\,x-\sigma\,y}+ C_1, \quad C_1 \in \mathbb{C},
\end{aligned} \label{eqn:case9}
\eeq
where $\rho^2+\sigma^2 \neq 0$.

Using the transformation,
\beq x' = \frac{\rho x + \sigma y }{\sqrt{(\rho^2 + \sigma^2)}}, \quad y' = \frac{\rho y - \sigma y }{\sqrt{(\rho^2 + \sigma^2)}}, \eeq

we recover the locally homogeneous TppW geometry with a 1D null rotation isotropy, which is described in Proposition \ref{prop:1dNullRot}. For this example the teleparallel analogue has the same symmetry group.

\item {\bf $G_5$ Case 10}

To determine the solution, we need to satisfy the following equation system according to ref. \cite{sippel1986symmetry}:
\begin{align}\label{addG5case2system}
f_1(u)\,\partial_{xx}H+f_2(u)\,\partial_{xy} H = -f_1''(u) ,
\nonumber\\
f_1(u)\,\partial_{xy}H+f_2(u)\,\partial_{yy} H = -f_2''(u).
\end{align}
For which the solution for $H(u,x,y)$ is \cite{sippel1986symmetry}: \beq H(u,x,y)=a(u)x^2+ 2c(u)\,xy+b(u)y^2  . \label{G5H0}\eeq
\noindent Substituting this into \eqref{addG5case2system} gives a system of two DEs for $f_1$ and $f_2$ in terms of $A, B$ and $C$:
\begin{align}\label{addG5case2systemalg}
2\left(a(u)\, f_1(u)+c(u)\, f_2(u)\right) = -f_1''(u) ,
\nonumber\\
2\left(c(u)\, f_1(u)+b(u)\, f_2(u)\right) = -f_2''(u).
\end{align}

As $H$ is not of the permitted forms in propositions \ref{prop:1dNullRot} or \ref{prop:1DSpin}, the teleparallel analogue will not admit a non-trivial isotropy subgroup. Thus, it has at most a 4D symmetry group. In order to admit an affine frame symmetry generator of the required form for Case 10, the vanishing of the Lie derivative of the tensor part of the torsion tensor imply that the complex function $M_0$ must be linear in $x$ and $y$:
\beq
\begin{aligned}
       M_0 &= \left(M_1(u)+i\,M_2(u)\right)\,x + \left(M_3(u)+i\,M_4(u)\right)\,y + p_1(u) + i\, p_2(u) . \\
\end{aligned} \label{G5M0}
\eeq
Substituting this into the remaining conditions give the following first order DEs for $f_1$ and $f_2$:
\begin{align}\label{DEsforddee2}
\sqrt{2} \left(M_1(u)\,f_1(u)+M_3(u)\,f_2(u)\right) = -f_1'(u) ,
\nonumber\\
\sqrt{2} \left(M_2(u)\,f_1(u)+M_4(u)\,f_2(u)\right) = -f_2'(u) .
\end{align}

\noindent This can be recast into an inhomogeneous linear second order DE for one function and an algebraic expression for the other. From the uniqueness and existence theorem for second order DEs, this will yield two solutions.

Additional conditions can be imposed on the arbitrary functions of $u$ by differentiating eqns \eqref{DEsforddee2} and substituting them into eqns \eqref{addG5case2systemalg}: 
\beq \begin{aligned}
    & f_2 ( m_4' - \sqrt{2} b -\sqrt{2}m_4^2 -\sqrt{2} m_2 m_3 )  + f_1 (m_2'-\frac{1}{\sqrt{2}}c  -\sqrt{2} (m_4 +m_1) m_2)= 0, \\ 
    &  f_1( m_1' - \sqrt{2} a - \sqrt{2} m_2 m_3  - \sqrt{2} m_1^2 ) + f_2 (m_3' - \frac{1}{\sqrt{2}} c - \sqrt{2} (m_4 + m_1 ) m_3 )=0. \\
\end{aligned} \label{abceqn} \eeq

\noindent These equations must hold for all solutions of $f_1$ and $f_2$ and so the coefficients of $f_1$ and $f_2$ must vanish. In order to determine $c$ algebraically, $m_2$ and $m_3$ must satisfy 
\beq m_3 = m_2 + C_0 e^{\sqrt{2} \int (m_1+m_4) du }.  \label{m2m3eqn} \eeq

Assuming that the functions $a,~b$ and $c$ satisfy the algebraic constraints in \eqref{abceqn}, for arbitrary functions $m_1, m_2$ and $m_4$ then this solution admits at most 2 additional affine frame symmetry generators and hence has a 3-dimensional (3D) symmetry group.

\end{itemize}

\subsection{6-dimensional symmetry groups}

All of the 6D symmetry groups are special cases of the 5D symmetry group in Case 10 of \cite{sippel1986symmetry}. These arise by adding one of the following KVFs:

\beq \begin{aligned}
    \text{Case 11: } & u\,\partial_u-v\,\partial_v ,\\
    \text{Case 12: } & u\,\partial_u-v\,\partial_v+\epsilon (y \partial_x - x \partial_y) ,\\
    \text{Case 13: } & \partial_u,\\
    \text{Case 14: } & \partial_u+\epsilon\,(y \partial_x - x \partial_y),\\
    \text{Case 15: } & y \partial_x - x \partial_y .\\
\end{aligned} \eeq

\begin{itemize}

\item {\bf $G_6$ Case 11}

The solutions to the eqns \eqref{Intro3} are:
\beq
\begin{aligned}
    H & = \frac{a\,x^2+2b x y+ c x^2}{u^2} ,\\
    M_0 &= \frac{ \left(m_1+i\,m_2\right)}{u}\,x + \frac{\left(m_3+i\,m_4\right)}{u}\,y,
\end{aligned} \label{eqn:Case11a}
\eeq
where $a$, $b$, $c$ and $m_i$ ($i=1,2,3,4$) are now constants. Using eqns \eqref{abceqn} and \eqref{m2m3eqn} the following algebraic conditions hold:
\beq \begin{aligned}\label{case11conditions}
    & m_3 = m_2, \\
    & a = -m_1^2 - \frac{m_1}{\sqrt{2}}-m_2^2, \\
    & b = -m_4^2-\frac{m_4}{\sqrt{2}} - m_2^2, \\
    & c = -\sqrt{2} m_2 -2 m_1 m_2 - 2 m_2 m_4. 
\end{aligned} \label{eqn:Case11b} \eeq

\noindent This solution is now locally homogeneous since it admits a 4D symmetry group. 

\item {\bf $G_6$ Case 12}

The solution in this case has the following functions for $H$ and $M_0$ in eqns \eqref{G5H0} and \eqref{G5M0}:

\beq \begin{aligned}
    a & = -\frac{D_0}{u^2}(\sin(2\epsilon \ln|u|)+D_1),\\
    b & = \frac{D_0}{u^2} (\sin(2\epsilon \ln|u|)-D_1),\\
    c & = -\frac{2D_0}{u^2}\cos(2\epsilon \ln|u|),
\end{aligned} \label{eqn:Case12a} \eeq
\beq \begin{aligned}
    m_1 & = \frac{-2(C_1 \sin(2\epsilon \ln|u|) + C_2 \cos(2\epsilon \ln|u|))+ C_3}{2u} ,\\
    m_2 & = \frac{2(C_1 \cos(2\epsilon \ln|u|) + C_2 \sin(2\epsilon \ln|u|))- C_4}{2u} ,\\
    m_3 & = \frac{2(C_1 \cos(2\epsilon \ln|u|) + C_2 \sin(2\epsilon \ln|u|))+ C_4}{2u}  ,\\
    m_4 & = \frac{2(C_1 \sin(2\epsilon \ln|u|) + C_2 \cos(2\epsilon \ln|u|))+ C_3}{2u} ,\\
\end{aligned} \label{eqn:Case12b} \eeq

\noindent where the conditions in eqn \eqref{abceqn} yields two possible branches from:
\beq 
    C_4 (\sqrt{2} C_3 +1) = 0. 
\eeq
\noindent In the first branch, we find the following conditions:
\beq 
    C_3 = - \frac{1}{\sqrt{2}},~C_1 =0,~D_0 = \sqrt{2} \epsilon C_2,~D_1 = -C_2^2+ \frac{C_4^2}{4}+\frac{1}{8}. \label{case12conditions} 
\eeq 
\noindent While in the second branch, we have: 
\beq 
    C_4 = 0, C_1=0, C_2=0, D_0 = 0, D_1 = -\frac{C_3^2}{4} - \frac{\sqrt{2} C_3}{4}.
\eeq

These solutions are now locally homogeneous since they admit a 4D symmetry group. 

\item {\bf $G_6$ Case 13}

The solutions to the eqns \eqref{Intro3} are:
\beq
\begin{aligned}
    H & = a\,x^2+2b x y+ c x^2 ,\\
    M_0 &=  \left(m_1+i\,m_2\right)\,x + \left(m_3+i\,m_4\right)\,y,
\end{aligned} \label{eqn:Case13a}
\eeq
where $a$, $b$, $c$ and $m_i$ ($i=1,2,3,4$) are constants. Using eqns \eqref{abceqn} and \eqref{m2m3eqn} the following algebraic conditions hold:
\beq \begin{aligned}
    & m_3 = m_2, \\
    & a = -m_1^2 -m_2^2, \\
    & b = -m_4^2- m_2^2, \\
    & c = -2 m_1 m_2 - 2 m_2 m_4. 
\end{aligned} \label{eqn:Case13b} \eeq

\noindent This solution is now locally homogeneous since it admits a 4D symmetry group. 

\item {\bf $G_6$ Case 14}

The solution in this case has the following functions for $H$ and $M_0$ in eqns \eqref{G5H0} and \eqref{G5M0}:

\beq \begin{aligned}
    a & = -D_0(\sin(2\epsilon u)+D_1),\\
    b & = D_0 (\sin(2\epsilon u)-D_1),\\
    c & = -2D_0 \cos(2\epsilon u),
\end{aligned} \label{eqn:Case14a} \eeq
\beq \begin{aligned}
    m_1 & = -2(C_1 \sin(2\epsilon u) + C_2 \cos(2\epsilon u))+ C_3,\\
    m_2 & = 2(C_1 \cos(2\epsilon u) + C_2 \sin(2\epsilon u))+ C_4 ,\\
    m_3 & = 2(C_1 \cos(2\epsilon u) - C_2 \sin(2\epsilon u))- C_4 ,\\
    m_4 & = -2(C_1 \sin(2\epsilon u) + C_2 \cos(2\epsilon u))+ C_3.\\
\end{aligned} \label{eqn:Case14b} \eeq

\noindent where the conditions in eqn \eqref{abceqn} yields two possible branches from:
\beq 
    C_3 C_4 = 0 .
\eeq
\noindent The first branch gives:
\beq 
    C_3=0,~ C_1=0,~D_0 = \sqrt{2} \epsilon C_2,~ D_1=- C_2^2+\frac{C_4^2}{4}. 
\eeq 
\noindent The second branch then yields:
\beq 
    &C_4=0, C_1 = \frac{C_2 C_3}{\epsilon \sqrt{2}}, D_0 = \frac{2 \sqrt{2} C_2 C_3+4 \sqrt{2} C_2 \epsilon^2}{4 \epsilon}, D_1 = -\frac{2 C_2^2 C_3^2 + 4C_2^2 \epsilon^2 + C_3^2 \epsilon^2}{4 \epsilon^2}.& \eeq



These solutions are now locally homogeneous since they admits a 4D symmetry group.

\item {\bf $G_6$ Case 15}

In this case, the equations from eqns \eqref{Intro3} yield:

\beq \begin{aligned}
    H &= h(u)(x^2 + y^2), \\
    M_0 &= \left(\frac{m_{,u}}{2 \sqrt{2} m}+i m(u) \right) (x+iy),
\end{aligned} \label{eqn:Case15a} \eeq

\noindent where 
\beq h =  m^2 + \frac{m_{,u,u}}{4 m^2} - \frac{3m_{,u}^2}{8 m^2}. \label{eqn:Case15b} \eeq

This admits a 4D group of affine frame symmetries, where the orbit is 3D with spatial rotations in the $x-y$ plane as an isotropy.

\begin{rem}
    Using the CK algorithm it is possible to show that this simpler solution is equivalent to the more complicated solution in Proposition \eqref{prop:1DSpin}. While an explicit coordinate transformation has not been found by the authors, its existence is guaranteed since the two solutions share the same discrete sequences and form of the Cartan invariants.
\end{rem}
\end{itemize}

\subsection{7-dimensional symmetry groups}

The 7D symmetry groups arise by adding two KVFs to the 5D symmetry group in Case 10. There are two possibilities for this:

\beq \begin{aligned}
    \text{Case 16 : }  & \partial_u \text{ and } y \partial_x - x \partial_y, \\
    \text{Case 17 : }  & u\,\partial_u-v\,\partial_v \text{ and } y \partial_x - x \partial_y . \\
\end{aligned} \eeq

\noindent The teleparallel analogue of these solutions have already been found using the CK algorithm. However, the results in Proposition \ref{prop:1DSpin} can be reproduced using eqn \eqref{Intro3} as well. 

\begin{itemize}

\item {\bf $G_7$ Case 16}

The solution in this case has the following functions for $H$ and $M_0$:
\beq
        H = C_1(x^2 +y^2),~ M_0 = C_1 (x+iy) .
\label{eqn:Case16} \eeq

 The orbits are now 4D and hence is a locally homogeneous solution with a 1D linear isotropy group of spatial rotations. This solutions thus admits a 5D group of symmetries.

\item {\bf $G_7$ Case 17}

The solution in this case has the following functions for $H$ and $M_0$:
    \beq
        H = \frac{h}{u^2}(x^2 +y^2),~ M_0 = \frac{(c_0+ic_1)}{u} (x+iy) . \label{eqn:Case17}
    \eeq
    \noindent In order to cover the range of permitted values of $h$ there are two subcases: 
    \begin{itemize}
        \item $h = -c_0^2-\frac{c_0}{\sqrt{2}}$ with $c_1 =0$ when $h \leq \frac{1}{8}$ ,
        \item $h= \frac{c_1^2+1}{8}$ with $c_0=-\frac{\sqrt{2}}{4u}$ when $h > \frac{1}{8}$ .
    \end{itemize}

 The orbits are now 4D and hence is a locally homogeneous solution with a 1D linear isotropy group of spatial rotations. This solutions thus admits a 5D group of symmetries.

\end{itemize}

\section{Conclusion} \label{sec: Discussion}

Using frame based methods, we have determined all possible symmetry groups for the TppW spacetimes. Most of the 2D and 3D groups found for ppWs in ref.~\cite{sippel1986symmetry} are preserved in the teleparallel analogue (TppW). However, for larger symmetry groups which admit non-trivial isotropy subgroups we have shown that the symmetry group of the teleparallel analogue is significantly smaller. The results of this analysis is summarized in the following table. In the table, the column under $G_d$ denotes the dimension of the symmetry group in GR, while the column under $\tilde{G}_d$, denotes the dimension of the symmetry group in TPG. We note that the orbit of the symmetry group is unchanged when going from GR to TPG.

We have also obtained new solutions that were missed in the original analysis of ppW spacetimes, namely an additional solution admitting a 3D group of symmetries. This solution then leads to a new subcase which admits a 4D group of symmetries. This is novel because the authors of ref.~\cite{sippel1986symmetry} did not find any 4D groups due to a mistake in their treatment of the DEs governing KVFs. These solutions were found by investigating the permitted linear isotropy groups of TppW spacetimes and hence demonstrates the relevance of the CK algorithm  beyond simply classifying spacetimes.

As an immediate application of this work, since the explicit form of the TppW spacetimes is given, these solutions could be perturbed externally to study the stability of these solutions in TEGR in an analogous manner to what has been done in Minkowski space \cite{Landry:2023egz}. As a more ambitious application, we note that there are other GW solutions in GR than ppW spacetimes \cite{Weinberg,Einstein:1937qu,Flanagan:2005yc}. These GW solutions are considered as weak perturbations to Minkowski spacetime when they are propagating in vacuum. However, these GWs do not necessarily propagate in a parallel manner as the TppW solutions. Investigating the teleparallel analogue of these more general GWs, using the methods developed in this paper would be beneficial to the study of TPG theories. 

Alternatively, in GR it is known that GWs could propagate in a (anti-) de Sitter spacetime \cite{podolsky1998interpretation,bishop2016gravitational}. The teleparallel analogue of deSitter spacetimes and anti-deSitter spacetimes have already been found \cite{Coley:2023dbg,mcnutt2024locally}. It is reasonable to assume that GW solutions such as those in ref. \cite{podolsky1998interpretation} could be found hidden in the CSI teleparalllel geometries.

 \begin{longtable}[c]{| c | c | c | c |  c |}

 \hline
 \multicolumn{5}{| c |}{Comparison of Cases}\\
 \hline
 S $\&$ P Cases & $G_d$ &  $\tilde{G}_d$ & Orbit & Conditions  \\
 \hline

 \endfirsthead

 \endhead


 \hline\hline
 \endlastfoot

 Case 1 & 1 & 1 & 1 & none \\
 Case 2 & 2 & 2 & 2 & \eqref{eqn:Case2}  \\
 Case 3 & 2& 2 & 2 & \eqref{eqn:Case3}   \\
 Case 4 & 2 & 2 & 2 & \eqref{eqn:Case4}   \\
 Case 5 & 3 & 3 & 3 & \eqref{eqn:Case5}   \\
 Case 6 & 3 & 3 & 3 & \eqref{eqn:Case6}   \\
 Case 7 & 3 & 3 & 3 & \eqref{eqn:Case7}   \\
 Case 8 & 3 & 3 & 3 & $\epsilon \neq 0$, \eqref{eqn:Case8b}   \\
 Case 8 & 3 & 3 & 3 & $\epsilon = 0$, \eqref{eqn:Case8a}    \\
 n/a & 3 & 3 & 3 & \eqref{eqn:CasemissingG3}   \\
  n/a & 4 & 4 & 4 & \eqref{eqn:CaseMissingG4} \\
 Case 9 & 5 & 5 & 5 & \eqref{eqn:case9}   \\
 Case 10 & 5 & 3 & 3 & \eqref{G5H0}, \eqref{G5M0}\\
 Case 11 & 6 & 4 & 4 & \eqref{eqn:Case11a}, \eqref{eqn:Case11b} \\
 Case 12 & 6 & 4 & 4 & \eqref{G5H0}, \eqref{eqn:Case12a}, \eqref{eqn:Case12b} \\
 Case 13 & 6 & 4 & 4 & \eqref{eqn:Case13a}, \eqref{eqn:Case13b} \\
 Case 14 & 6 & 4 & 4 & \eqref{G5H0}, \eqref{eqn:Case14a}, \eqref{eqn:Case14b} \\
 Case 15 & 6 & 4 & 3 & \eqref{eqn:Case15a}, \eqref{eqn:Case15b} \\
 Case 16 & 7 & 5 & 4 & \eqref{eqn:Case16} \\ 
 Case 17 & 7 & 5 & 4 & \eqref{eqn:Case17} \\

 \end{longtable}

\section*{Acknowledgments}

AL is supported by an Atlantic Association of Research in Mathematical Sciences (AARMS) fellowship. DDM was supported by the Norwegian Financial Mechanism 2014-2021 (project registration number 2019/34/H/ST1/00636). 

\appendix

\section{Components of the torsion tensor in the proper frame}\label{sec:appendixA}

The vector and axial parts of the torsion are, respectively:

\beq
V_a &= \sqrt{2}(Re(M)_{,x} - i Im(M)_{,y}), \\
A_a &= \frac{\sqrt{2}}{3} (Im(M)_{,x}- i Re(M)_{,y}).
\eeq

For the tensor part of the torsion tensor, we will only display the algebraically independent components relative to the complex null coframe basis. All other components can be recovered from index symmetries of the torsion tensor or by complex conjugation.

\begin{align}
\hat{T}_{212} =& \frac{\sqrt{2}}{3} ( Re(M)_{,x} + i Im(M)_{,y}),\\
\hat{T}_{223} =& \sqrt{2} (\bar{M} (\bar{M}_{,x} + i\bar{M}_{,y}) + M(\bar{M}_{,x}-i\bar{M}_{,y})-\bar{M}_{,u} + H_{,x}-iH_{,y}),\\
\hat{T}_{234} =& \frac{\sqrt{2}}{3}(Im(M)_{,x} +iRe(M)_{,y}), \\
\hat{T}_{323} =& -\sqrt{2}(\bar{M}_{,x}-i\bar{M}_{,y}).
\end{align}


\bibliographystyle{apsrev4-2}
\bibliography{TPppwave-Reference-file}

\newpage

\appendix

\end{document}